\documentclass[runningheads]{llncs}

\usepackage[T1]{fontenc}
% T1 fonts will be used to generate the final print and online PDFs,
% so please use T1 fonts in your manuscript whenever possible.
% Other font encondings may result in incorrect characters.
%
\usepackage{graphicx}
% Used for displaying a sample figure. If possible, figure files should
% be included in EPS format.
%
% If you use the hyperref package, please uncomment the following two lines
% to display URLs in blue roman font according to Springer's eBook style:
%\usepackage{color}
%\renewcommand\UrlFont{\color{blue}\rmfamily}
%\urlstyle{rm}
%

\usepackage[utf8]{inputenc}
\usepackage{amsfonts,amsmath}
\usepackage{enumerate}
\usepackage{mathrsfs}
\usepackage{comment}
\usepackage{graphicx} 
\usepackage{commath}
\usepackage{afterpage}
\usepackage{xcolor}
\usepackage{listings}
\usepackage{paralist}
\usepackage{enumitem}
\usepackage{caption}
\usepackage{subcaption}
\usepackage{changepage}
\usepackage{thmtools,thm-restate}
\usepackage{mathtools}
\usepackage{multirow}
\usepackage{pbox}
\usepackage[T1]{fontenc}
\usepackage[scaled=0.85]{beramono}
\usepackage{tikz}
\usepackage{hyperref}

\usetikzlibrary{arrows}
\usetikzlibrary{shapes}
\usetikzlibrary{calc}
\usetikzlibrary{automata}
\usetikzlibrary{positioning}
\usetikzlibrary{angles}

\tikzstyle{ang}=[regular polygon, regular polygon sides = 3,draw,inner sep=0pt,minimum size=6mm, yshift = -0.75 mm]
\tikzstyle{dem}=[shape=diamond,draw,inner sep=0pt,minimum size=6mm]
\tikzstyle{ran}=[shape=circle,draw,inner sep=0pt,minimum size=6mm]
\tikzstyle{det}=[shape=rectangle,draw,inner sep=1pt,minimum size=5mm]
\tikzstyle{tran}=[draw,->,>=stealth, rounded corners]
\tikzstyle{pt}=[shape=circle,draw,inner sep=0pt,minimum size=0mm]
\tikzstyle{trannoarrow}=[draw,>=stealth, rounded corners]

\begin{document}
	
\newcommand{\PP}{{\sc PP}}
\newcommand{\support}{\mathit{supp}}
\newcommand{\dom}{\mathit{dom}}
\newcommand{\Vout}{V_{\mathit{out}}}
\newcommand{\pCFG}{\mathcal{C}}
\newcommand{\locinit}{\loc_{\mathit{in}}}
\newcommand{\locterm}{\loc_{\mathit{out}}}
\newcommand{\locfail}{\loc_{\mathit{fail}}}
\newcommand{\vecinit}{\mathbf{x}_{\mathit{in}}}
\newcommand{\vecterm}{\mathbf{x}_{\mathit{out}}}
\newcommand{\lem}{\eta}
\newcommand{\preexp}[1]{\mathit{pre}_{#1}}
\newcommand{\varinit}{\Theta_{\mathit{init}}}
\newcommand{\updates}{\mathit{Up}}
\newcommand{\Fpath}{\mathit{Fpath}}
\newcommand{\Reach}{\mathit{Reach}}
\newcommand{\Obs}{\mathit{Obs}}
\newcommand{\State}{\mathit{State}}
\newcommand{\Run}{\mathit{Run}}
\newcommand{\RunObs}{\mathit{Run}_{\texttt{observe}}}
\newcommand{\RunFail}{\mathit{Run}_{\texttt{fail}}}
\newcommand{\transitions}{\mapsto}
\newcommand{\probdist}{\mathit{Pr}}
\newcommand{\guards}{G}
\newcommand{\prob}{\mathit{Pr}}
\newcommand{\Inf}{\mathit{Inf}}
\newcommand{\Unif}{\mathit{Uniform}}
\newcommand{\Normal}{\mathit{Normal}}
\newcommand{\predicate}{\Psi}
\newcommand{\pvars}{V}
\newcommand{\rvars}{R}
\newcommand{\locs}{\mathit{L}}
\newcommand{\loc}{\ell}
\newcommand{\tran}{\tau}
\newcommand{\probm}{\mathbb{P}}
\newcommand{\E}{\mathbb{E}}
\newcommand{\expr}{E}
\newcommand{\val}{\mathbf{x}}
\newcommand{\todoP}[1]{\textbf{\textcolor{red}{TODO} P: #1}}
\newcommand{\todoDj}[1]{\textbf{\textcolor{blue}{TODO} Dj: #1}}
\newcommand{\Rset}{\mathbb{R}}
\newcommand{\sigmaAlg}{\mathcal{F}}
\newcommand{\TimeTerm}{T}
\newcommand{\Termset}{\mathit{Term}}
\newcommand{\Output}{\mathit{Output}}
\newcommand{\Restart}{\mathit{Restart}}
\newcommand{\prestart}{p_{\mathit{restart}}}
\newcommand{\TV}{\mathit{TV}}
\newcommand{\K}{\mathcal{K}}
\newcommand{\out}{\mathit{out}}
\newcommand{\Rsetnn}{\Rset_{\geq 0}}
\newcommand{\probmalt}{\mu}
\newcommand{\veca}[1]{\mathbf{#1}}
\newcommand{\nextv}{\mathit{Next}}
\newcommand{\stime}{T}
\newcommand{\indicator}[1]{\mathbb{I}_{#1}}
\newcommand{\Nset}{\mathbb{N}}
\newcommand{\Mono}{\textit{Mono}}
\newcommand{\pcfg}{\pCFG}
\newcommand{\weights}{w}
\newcommand{\iver}[1]{[#1]}
\newcommand{\pw}{W}
\newcommand{\pCFGrestarted}{\pCFG_{\text{restarted}}}
\newcommand{\pCFGaux}{\pCFG_{\text{aux}}}
\newcommand{\Prestarted}{P_{\text{res}}}
\newcommand{\Paux}{P_{\text{aux}}}

\lstdefinelanguage{affprob}
{
	morekeywords={angel,demon, choice, prob, if, then, else, fi,
		while, do, od,
		true, false, and, or, skip, sample, assume, observe, score, return, go-to},
	sensitive = false,
%	morecomment=[l]{//},
	escapeinside={@:}{:@}
}
\lstset{language=affprob}
\lstset{tabsize=3,mathescape}

\newcommand{\DZ}[1]{{\color{red}\textbf{DZ:} #1}}

\title{Refuting Equivalence in Probabilistic Programs with Conditioning}

%\title{Refuting Distributional Equivalence in~Probabilistic Programs with Conditioning}

\author{Krishnendu Chatterjee\inst{1}\orcidID{0000-0002-4561-241X} \and
Ehsan Kafshdar Goharshady\inst{1}\orcidID{0000-0002-8595-0587} \and
Petr Novotn\'y\inst{2}\orcidID{0000-0002-5026-4392} \and
\DJ{}or\dj{}e \v{Z}ikeli\'c\inst{3}\orcidID{0000-0002-4681-1699}
}

\institute{Institute of Science and Technology Austria (ISTA), Klosterneuburg, Austria \email{\{krishnendu.chatterjee, egoharsh\}@ist.ac.at} \and
Masaryk University, Brno, Czech Republic\\ \email{petr.novotny@fi.muni.cz}
\and
Singapore Management University, Singapore\\
\email{dzikelic@smu.edu.sg}}

\titlerunning{Refuting Equivalence in Probabilistic Programs with Conditioning}
\authorrunning{K.~Chatterjee, E.~K.~Goharshady, P.~Novotn\'y, \DJ.~\v{Z}ikeli\'c}

\maketitle

\begin{abstract}
We consider the problem of refuting equivalence of probabilistic programs, i.e., the problem of proving that two probabilistic programs induce different output distributions. We study this problem in the context of programs with conditioning (i.e., with observe and score statements), where the output distribution is conditioned by the event that all the observe statements along a run evaluate to true, and where the probability densities of different runs may be updated via the score statements. Building on a recent work on programs without conditioning, we present a new equivalence refutation method for programs with conditioning. Our method is based on weighted restarting, a novel transformation of probabilistic programs with conditioning to the output equivalent probabilistic programs without conditioning that we introduce in this work. Our method is the first to be both a)~fully automated, and b)~providing provably correct answers. We demonstrate the applicability of our method on a set of programs from the probabilistic inference literature.
\end{abstract}
 
\section{Introduction}
\label{sec:intro}

\emph{Probabilistic programs.} Probabilistic programs (PPs) are standard programs extended with \emph{probabilistic instructions,} such as sampling of variable values from probability distributions. PPs are used in numerous application domains, such as networking~\cite{FosterKMR016}, privacy and security~\cite{BartheGGHS16,BartheGHP16}, or planning and robotics~\cite{Ghahramani15,Thrun00}, to name a few. Unlike deterministic programs, which (assuming they terminate) for each concrete input produce a single output, PPs induce \emph{output distributions}, i.e. \emph{probability distributions} over the space of possible outputs.

\smallskip
\noindent
\emph{Conditioning.} While PPs with sampling instructions can implement various randomized algorithms, the recent proliferation of probabilistic programming is driven largely by the advances in machine learning, where PPs serve as tools for {Bayesian inference.} For this, a PP has to be equipped with \texttt{observe} statements which can be used to condition the data in program's variables through observations, as well as \texttt{score} statements which can be used to assign different probability densities to different program executions. %Formally, given an assertion \( \psi \) over program variables, the statement \texttt{observe(\( \psi \))} blocks all runs that violate \( \psi \) when passing through the statement. The statement $\texttt{score}(e)$ multiplies the probability density of all runs passing through the statement by the value of a non-negative expression $e$ over program variables.
Probability of the non-blocked runs is then renormalized~\cite{Katoen2015,DBLP:journals/toplas/OlmedoGJKKM18,BGV18}. In this way, the program's output distribution is \emph{conditioned} by the event that the program's run satisfies all observe-statement assertions it encounters and all program run probability densities are {\em renormalized} according to their cummulative score values.
%CITE "understanding of probabilistic programming by JP and also other stuff"

\smallskip
\noindent 
\emph{Static and relational analysis of PPs.} Given their inherent randomness, PPs are extremely difficult to debug. This has spurred demand for formal analyses of PPs so as to obtain provable assurances of their correctness. Over the past decade, numerous static analyses for PPs were developed, focusing mostly on properties of \emph{individual programs:} termination~(e.g.~\cite{MM05,ChakarovS13,HolgerPOPL,ChatterjeeFNH18,ChatterjeeNZ17,AgrawalC018,McIverMKK18,CheH:2020:omega-reg-decomposition,ChatterjeeGNZZ21,TakisakaOUH21,DBLP:journals/corr/abs-2304-11363,Rupak24POPL}), safety~\cite{SankaranarayananCG13,ChatterjeeNZ17,TakisakaOUH21,ChatterjeeGMZ22,BeutnerOZ22,BatzCJKKM23,WangYFLO24}, runtime and cost analysis~\cite{NgoC018,Wang0GCQS19,AvaMS:20:modular-cost,Wang0R21,DasWH:23:seesion-types,ChatterjeeGMZ24}, or input-output behavior~\cite{ChenKKW22}. Recently, attention has been increasingly given to verifying \emph{relational properties} of PPs, i.e., properties of program \emph{pairs,} such as sensitivity~\cite{BartheEGHS18,HuangWM18,WangFCDX20,0001BHKKM21} or differential privacy~\cite{BartheGGHS16,AlbarghouthiH18}. 

\smallskip
\noindent
\emph{Refuting equivalence of PPs.} In this work, we study the relational property of program \emph{equivalence}. Two PPs with fixed inputs are called equivalent if they induce the same output distributions~\cite{MurawskiO05,LegayMOW08,McIverM20}. Equivalence analysis allows one to check, e.g., whether compiler optimizations preserve the input-output behavior of the program, or if a sampler follows a desired distribution~\cite{ChakrabortyM19}. %On the other hand, the similarity problem is concerned with bounding the distance between the output distributions of two PPs. In this work, we measure the Kantorovich distance between two PPs~\cite{villani2021topics}. Kantorovich distance is an important class of distances and other distances such as Total Variation (TV) can be reduced to it. See Section~\ref{sec:problem} for a formal definition of Kantorovich distance and a discussion of its relation to other distances.
We focus on a bug-hunting aspect of the equivalence property, i.e., \emph{equivalence refutation:} given two PPs, prove that their output distributions differ. Recently, the work~\cite{chatterjee2024equivalence} presented an automated method for equivalence refutation in PPs that has formal correctness guarantees. However, the method of~\cite{chatterjee2024equivalence} works only for PPs \emph{without conditioning.} In this paper, we focus on refuting equivalence of PPs \emph{with conditioning}, i.e., proving the inequality of their \emph{conditional output distributions}.

\smallskip
\noindent
\textbf{Our contributions.} We present a new method for refuting equivalence of PPs. Our method is the first to have \emph{all} of the following properties: (i) It is fully automated, (ii) it is applicable infinite-state, discrete and continuous, Turing-complete PPs \emph{with conditioning}, and (iii) it provides provably correct refutations.

We build on the notions of \emph{upper and lower expectation supermartingales (U/LESMs)} presented in~\cite{chatterjee2024equivalence}. U/LESMs map program states to real numbers, providing upper and lower bounds on the expected value of some function \( f \) \emph{at output.} During the refutation process, the concrete function \( f \) is synthesized along with the corresponding UESM and LESM to provide the required refutation proof: for a given pair \( (P_1, P_2) \) of PPs, we seek \( f \) whose expected value at the output of \( P_1 \) is provably different (due to the bounds provided by U/LESMs) from its expected value at the output of \( P_2 \) - this witnesses the difference of the underlying output distributions. The computation of \( f \) and the U/LESMs proceeds via polynomial constraint solving. However, the method is restricted to PPs without conditioning, and its extension to PPs with conditioning was left as an open problem (see~\cite[Section~9]{chatterjee2024equivalence}). Our contributions are as follows:
\begin{compactenum}
	\item {\em Equivalence refutation for PPs with conditioning.} We extend the above to the much more general setting of PPs with conditioning, allowing for both \texttt{observe} and \texttt{score} statements. We achieve this via the notion of {\em weighted restarting in PPs} that we introduce in this work. Weighted restarting is a construction that translates a PP with conditioning into a PP without conditioning, whose output distribution is equivalent to that of the original PP. Hence, by applying weighted restarting to two PPs with conditioning, we can reduce their equivalence refutation to refuting equivalence of a pair of PPs without conditioning, for which we can utilize the method of~\cite{chatterjee2024equivalence}. %The formal reduction requires additional care though, as our weighted restarting method requires the usage of \texttt{go-to} statements in programs in order to remove conditioning instructions. In contrast, the PP syntax in~\cite{chatterjee2024equivalence} does not support \texttt{go-to} statements. To overcome this technical difficulty, we show that the method naturally extends to refuting equivalence and similarity in PPs with \texttt{go-to} statements. This allows us to combine our weighted restarting together with the prior method of~\cite{chatterjee2024equivalence}, in order to obtain a new method for equivalence and similarity refutation in PPs with conditioning.
	
	\item {\em Soundness and completeness guarantees.} While the method of~\cite{chatterjee2024equivalence} was proved to be sound, completeness of the proof rule was not studied and, to the best of our knowledge, no sound and complete proof rule for equivalence refutation in PPs has been proposed. In this work, we show that the proof rule based on U/LESMs is {\em sound and complete} for refuting equivalence. This result is novel both in the settings of PPs with and without conditioning. Moreover, it also allows us to prove that the polynomial constraint solving algorithm for computing a function $f$ over program outputs together with the U/LESM is {\em semi-complete}, in the sense that it is guaranteed to be able to refute equivalence of PPs whenever equivalence can be witnessed by a function $f$ and a pair of U/LESM that can be specified via polynomial functions of some maximal polynomial degree $D$. Note that this is a significant guarantee, given that the problem of equivalence checking of two programs is undecidable~\cite{GoldblattJ12}, hence no algorithm can be both sound and complete.
	
	%, for the special case of programs with either (1)~statically bounded termination time, or (2)~bounded variable values, the proof rule based on U/LESMs is {\em sound and complete} for refuting equivalence and similarity of PPs with conditioning. This result is novel both in the settings of PPs with conditioning and PPs without conditioning. Moreover, it also allows us to prove that the polynomial constraint solving algorithm for computing a function $f$ over program outputs together with the U/LESM is {\em semi-complete}, in the sense that it is guaranteed to be able to refute equivalence and similarity of PPs whenever equivalence and similarity can be witnessed by a function $f$ and a pair of U/LESM that can be specified via polynomial functions of some maximal polynomial degree $D$.
	
	\item {\em Experimental evaluation.} We evaluate our approach on benchmarks from the well known PSI inference tool~\cite{DBLP:conf/cav/GehrMV16}, as well as on several loopy programs. We compare with a PSI+Mathematica~\cite{Mathematica} pipeline. Our method significantly outperforms the baseline in terms of the number of refuted~equivalences.
\end{compactenum}
Finally, while the focus of this work is on equivalence refutation, weighted restarting also allows us to design a new method for {\em similarity refutation} in PPs with conditioning. The similarity refutation problem is concerned with computing a lower bound on the distance between output distributions of two PPs. The work~\cite{chatterjee2024equivalence} proposed a method for similarity refutation in PPs without conditioning, and in this work we generalize it to the much more general setting of PPs with conditioning. Due to space restrictions, we defer this result to Appendix~\ref{app:similarity}. However, in our experiments in Section~\ref{sec:experiments}, for each refuted equivalence we also report a lower bound on the Kantorovich distance~\cite{villani2021topics} computed by our method.

\smallskip
\noindent
\emph{Related work.} The study of PP equivalence is partly motivated by the necessity to reason about the correctness and accuracy of probabilistic inference algorithms. Previous approaches to this issue~\cite{BatuFRSW13,ChanDVV14,GrosseGA15,GrosseAR16,Cusumano-Towner17,canonne2020survey,Domke2021} employ sampling methods that provide statistical guarantees on the correctness of the result. %In contrast, our method provides provably correct equivalence refutations.

Relational analyses of PPs mostly focused on \emph{sensitivity properties}~\cite{BartheEGHS18,HuangWM18,WangFCDX20,0001BHKKM21}, often using \emph{coupling proofs}~\cite{BartheGGHS16,AlbarghouthiH18}. Intuitively, given a probabilistic program and its two inputs, the goal here is to bound the distance of output distributions corresponding to these inputs. In particular, the relational analysis in this context focuses on relations between different \emph{inputs,} rather than different \emph{programs}. Indeed, the previous approaches to sensitivity analysis typically exploit the "aligned" control flow of two executions on sufficiently close inputs (with a notable exception of the recent work~\cite{Gregersen2024asynchronous} which aims to dispel the alignment requirement.) Our method makes no such assumptions. Moreover, none of the aforementioned works except for~\cite{HuangWM18} considers programs with conditioning.

%As discussed in~\cite{chatterjee2024equivalence}, the concept of U/LESMs is inspired by the notions of upper and lower \emph{cost supermartingales}~\cite{Wang0GCQS19} and super- and sub-invariants for expectation bounds~\cite{HarkKGK20}. The key differences of U/LESMs from the previous approaches are summarized in Section~5.2. of~\cite{chatterjee2024equivalence}. In a nutshell, none of these previous approaches has been applied to relational analysis, and the use of U/LESMs for such analysis requires innovative steps, such as employing both upper- and lower bounds within a single rule and synthesizing the function \( f \) (whose expectation is being bounded by the equivalence refutation rule) \emph{together} with the U/LESMs that provide the expectation bounds. Moreover, the U/LESM-based approach works for programs with both positive- and negative-valued variables, whereas the assumption of non-negative variables is imposed both in~\cite{HarkKGK20} and in other automated approaches to computing expectation bounds~\cite{AvaMS:20:modular-cost,Wang0R21}. Also, the aforementioned works \emph{do not} consider programs with conditioning.

The work~\cite{ZikelicCBR22} uses ``potential functions'' to analyze resource usage difference of a program pair. Potential functions for the two programs are computed simultaneously, similarly to our approach. However, the work does not involve probabilistic programs and its overall aim (cost analysis) differs from ours.

\section{Preliminaries}\label{sec:prelims}

%We use boldface notation to represent vectors, e.g.~$\mathbf{x}$ or $\mathbf{y}$. We use $\mathbf{x}[i]$ to denote the $i$-th component of $\mathbf{x}$. For a scalar $a$, we use $\mathbf{x}(i \leftarrow a)$ to denote the vector $\mathbf{y}$ such that $\mathbf{y}[i] = a$ and $\mathbf{y}[j] = \mathbf{x}[j]$ for $j \neq i$.
We assume that the reader is familiar with notions from probability theory such as probability measure, random variable and expected value. The term \emph{probability distribution} refers to a probability measure over a subset of Euclidean space $\mathbb{R}^n$. We use boldface notation to represent vectors, e.g.~$\mathbf{x}$ or $\mathbf{y}$.

\smallskip\noindent{\em Syntax.} We consider imperative arithmetic probabilistic programs (PPs) allowing standard programming constructs such as variable assignments, if-branching, sequential composition, go-to statements, and loops. All program variables are assumed to be integer- or real-valued. For PP semantics to be well defined, we assume that all arithmetic expressions are Borel-measurable functions~\cite{Williams91}.

In addition, we allow two types of probabilistic instructions, namely {\em sampling} and {\em conditioning}. Sampling instructions appear on the right-hand-side of program variable assignments and are represented in our syntax by a command $\textbf{sample}(d)$, where $d$ is a probability distribution. We do not impose specific restrictions and allow sampling instructions from both discrete and continuous distributions. Importantly, we allow the two standard conditioning instructions: $\textbf{observe}(\phi)$, where $\phi$ is a boolean predicate over program variables, and \( \textbf{score}(e) \), where \( e \) is a non-negative expression over program variables. A program is \emph{conditioning-free} if it does not contain any conditioning instructions.

Probabilistic branching instructions, i.e.~$\textbf{if prob}(p)$, can be obtained as syntactic sugar by using a sampling instruction followed by conditional branching. %Hence, we will use probabilistic branching in our examples, while omitting it from the formal description in the interest of space.

\smallskip\noindent{\em Probabilistic control flow graphs (pCFGs).} In order to formalize the semantics of our PPs and to present our results and algorithms, we consider an operational representation of PPs called {\em probabilistic control-flow graphs (pCFGs)} extended with the concept of {\em weighting} as in~\cite{WangYFLO24}. The use of pCFGs in the analysis of PPs is standard, see e.g.~\cite{ChatterjeeFNH18,AgrawalC018}. Hence, we keep this exposition brief. Formally, a pCFG is a tuple $\pCFG=(\locs,V,\Vout,\locinit,\vecinit,\transitions,\guards,\updates,\weights,\locterm)$, where:
\begin{compactitem}
	\item $\locs$ is a finite set of {\em locations};
	\item $V=\{x_1,\dots,x_{|V|}\}$ is a finite set of {\em program variables};
	\item $\Vout = \{x_1,\dots,x_{|\Vout|}\} \subseteq V$ is a finite set of {\em output variables};
	\item $\locinit \in \locs$ is the {\em initial location} and $\vecinit \in \mathbb{R}^{|V|}$ the {\em initial variable valuation};
	\item $\transitions\,\subseteq \locs \times \locs$ is a finite set of {\em transitions}; %of the form $\tau=(\loc,\loc')$, where $\loc$ is the {\em source location} and $\loc'$ is the {\em successor locations} of $\tau$;
	%We require that $\sum_{\loc' \in \locs}\prob(\loc') = 1$;
	\item $\guards$ is a map assigning to each transition $\tau\in\,\transitions$ a {\em guard} $\guards(\tau)$, which is a predicate over $V$ specifying whether $\tau$ can be executed.
	\item $\updates$ is a map assigning to each transition $\tau\in\,\transitions$ an {\em update} $\updates(\tau)=(u_1,\dots,u_{|V|})$ where for each $j\in\{1,\dots,|V|\}$ the variable update $u_j$ is either
	\begin{compactitem}
		\item the bottom element $\bot$ denoting no variable update, or
		\item a Borel-measurable arithmetic expression $u:\mathbb{R}^{|V|}\rightarrow\mathbb{R}$, or
		\item a probability distribution $u = \delta$;
	\end{compactitem}
	\item \( \weights\colon \transitions \times \Rset^{|V|} \rightarrow[0,\infty)\) is a {\em weighting function,} and
	\item $\locterm \in \locs$ the {\em terminal location}.
\end{compactitem}
Translation of PPs into pCFGs is standard~\cite{ChatterjeeFNH18,AgrawalC018,WangYFLO24}, hence we omit the details. We denote by \( \pCFG(P) \) the pCFG induced by a program \( P \).
Intuitively, locations of \( \pCFG(P) \) correspond to program instructions, while transition(s) out of a location encodes the effects of the respective instruction. Weight of a transition is~1 unless the transition encodes one of the conditioning instructions: If \( \tau \) is encodes a \( \textbf{score}(e) \) instruction, we put \( \weights(\tau,\mathbf{x}) = e(\mathbf{x}) \). %If \( \tau \) does not involve any conditioning, we put \( \weights(\tau,\mathbf{x}) = 1 \) for all \( \vec{x} \).
If \( \tau \) encodes an \( \textbf{observe}(\phi) \) statement, we put
\( \weights(\tau,\mathbf{x}) = 1 \) if \( \mathbf{x} \models \phi \) and \( \weights(\tau,\mathbf{x}) = 0 \) otherwise. Note that if \( P \) is {conditioning-free} then in \( \pCFG(P) \) we have  \( \weights(\tau,\mathbf{x}) = 1 \) for all  \( \tau \) and \( \vec{x} \).

%To encode observation instructions into pCFGs, we introduce one location $\loc$ for each observation statement $\textbf{observe}(\phi)$ in the program. The location $\loc$ has two outgoing transitions -- transition $\tau = (\loc,\loc')$ to the location $\loc'$ corresponding to the next program instruction with guard $\guards(\tau) = \phi$ and with no variable update, and transition $\tau_{\text{fail}} = (\loc,\locfail)$ to the observation violation location with guard $\guards(\tau_{\text{fail}}) = \neg \phi$ and with update which resets all variables to $0$. The observation violation location $\locfail$ has a single outgoing transition $\tau = (\locfail,\locterm)$ to $\locterm$ with guard $\guards(\tau) = \mathit{true}$ and with no variable update. Finally, the terminal location $\locterm$ has a single outgoing transition which is a self-loop with no variable update.

%W.l.o.g. we assume that each location $\loc$ has at least one outgoing transition and that the pCFG has no dead-ends, i.e. the disjunction of guards of all transitions outgoing from $\loc$ is equivalent to $\mathit{true}$. To ensure that there is no non-determinism in our PPs, we also require that guards of two transitions $\tau_1$ and $\tau_2$ outgoing from $\loc$ are {\em mutually exclusive}, i.e.~$\guards(\tau_1)\land\guards(\tau_2)\equiv\mathit{false}$. These assumptions are satisfied by default by all pCFGs induced by PPs in our syntax.

\begin{example}[Running example]
	Fig.~\ref{fig:running} shows a PP pair that will serve as our running example. Program variables in both PPs are $V = \{x,y,c,r\}$, with output variables $\Vout = \{x,y\}$. In each loop iteration, with probability $0.5$ the variable $x$ is decremented by a random value sampled according to a beta distribution, whereas with probability $0.5$ the variable $y$ is incremented by $1$. Hence, $y$ tracks the number of loop iterations in which $x$ is not decremented. Upon loop termination, the \textbf{score} command is used to multiply the probability density of each run by the value of $y$ upon termination. Thus, larger probability density is given to program runs that modify $x$ a smaller number of times. The programs differ only in the parameters of the beta distribution (highlighted in blue).
\end{example}

\begin{figure}[t]
	\begin{minipage}[c]{0.45\textwidth}
		\begin{lstlisting}[mathescape]
	$\locinit$:   $x,y,c,r = 0$
	$\loc_1$:  while $c \leq 999$:
	$\loc_2$:		if prob($0.5$):
	$\loc_3$:			$\textcolor{blue}{\textit{r} := \textbf{Beta}(2,2)}$
	$\loc_4$:			$x:= x - r$
			else:
	$\loc_5$:			$y := y + 1$
	$\loc_6$:		$c = c + 1$
	$\loc_7$:$\,\,\,\,$   score($y$)
	$\locterm$: return($x,y$)
		\end{lstlisting}
	\end{minipage}
	\begin{minipage}[c]{0.45\textwidth}
		\begin{lstlisting}[mathescape]
		$\locinit$: $x,y,c,r = 0$
		$\loc_1$:  while $c \leq 999$:
		$\loc_2$:		if prob($0.5$):
		$\loc_3$:			$\textcolor{blue}{\textit{r} := \textbf{Beta}(2,3)}$
		$\loc_4$:			$x:= x - r$
				else:
		$\loc_5$:			$y := y + 1$
		$\loc_6$:		$c = c + 1$
		$\loc_7$:$\,\,\,\,$   score($y$)
		$\locterm$: return($x,y$)
		\end{lstlisting}
	\end{minipage}
	\vspace{-1em}
	\caption{Running example. Labels $\locinit$, $\loc_1, \dots, \loc_7, \locterm$ correspond to locations in the probabilistic control-flow graphs (pCFGs) of both programs.}
	\label{fig:running}
	\vspace{-1em}
\end{figure}

\noindent{\em States, paths and runs.} A {\em state} in a pCFG $\pCFG$ is a tuple $(\loc,\mathbf{x})$, where $\loc$ is a location and $\mathbf{x}\in\mathbb{R}^{|V|}$ is a variable valuation. A transition $\tau=(\loc,\loc')$ is {\em enabled} at a state $(\loc,\mathbf{x})$ if $\mathbf{x}\models\guards(\tau)$. A state $(\loc',\mathbf{x}')$ is a {\em successor} of $(\loc,\mathbf{x})$, if there exists an enabled transition $\tau=(\loc,\loc')$ in $\pCFG$ such that $\mathbf{x}'$ is a possible result of applying the update of $\tau$ to $\mathbf{x}$. The state $(\locinit,\vecinit)$ is the {\em initial state}. A state $(\loc,\mathbf{x})$ is said to be {\em terminal}, if $\loc=\locterm$. We use $\State^\pCFG$ to denote the set of all states~in~$\pCFG$.

A {\em finite path} in $\pCFG$ is a sequence of successor states $(\loc_0,\mathbf{x}_0),(\loc_1,\mathbf{x}_1),\dots,(\loc_k,\mathbf{x}_k)$ with $(\loc_0,\mathbf{x}_0)=(\locinit,\vecinit)$. A state $(\loc,\mathbf{x})$ is {\em reachable} in $\pCFG$ if there exists a finite path in $\pCFG$ whose last state is $(\loc,\mathbf{x})$. A {\em run} (or an {\em execution}) in $\pCFG$ is an infinite sequence of states whose each finite prefix is a finite path. 
%A run is said to be {\em observation non-violating}, if it does not contain any state with location $\locfail$. 
We use $\Fpath^\pCFG$, $\Run^\pCFG$, and $\Reach^\pCFG$ to denote the sets of all finite paths, runs, and reachable states in~$\pCFG$.

A run $\rho \in \Run^\pCFG$ is {\em terminating} if it reaches some terminal state $(\locterm,\mathbf{x})$. We use $\Termset\subseteq\Run^{\pCFG}$ to denote the set of terminating runs in $\Run^{\pCFG}$. We define the {\em termination time} of $\rho$ via $\TimeTerm(\rho) = \inf_{i \geq 0}\{i \mid \loc_i = \locterm \}$, with $\TimeTerm(\rho) = \infty$ if $\rho$ is not terminating. A \emph{cumulative weight} of a terminating run \( \rho = (\loc_0,\mathbf{x}_0),(\loc_1,\mathbf{x}_1),\dots\) is 
\( \pw(\rho) = \prod_{i=0}^{\TimeTerm(\rho)-1} \weights((\loc_i,\loc_{i+1}),\mathbf{x}_i), \) with an empty product being equal to 1. 

\smallskip\noindent{\em Semantics.} The pCFG semantics are formalized as Markov chains with weights~\cite{WangYFLO24}. In particular, a pCFG $\pCFG$ defines a discrete-time Markov process over the states of $\pCFG$ whose trajectories correspond to runs in $\pCFG$. Each trajectory starts in the initial state $(\locinit,\vecinit)$ with weight $1$. Then, at each time step $i \in \mathbb{N}_0$, if the trajectory is in state $(\loc_i,\mathbf{x}_i)$, the next trajectory state $(\loc_{i+1},\mathbf{x}_{i+1})$ is defined according to the unique  pCFG transition $\tau_i$ enabled at $(\loc_i,\mathbf{x}_i)$, and the weight of the trajectory is multiplied by the value of the weight function $w(\tau_i,\mathbf{x}_i)$. 
%Formally, we set $\mathbf{x}_{i+1}[j] = \nextv(\tau_i,\val_i)[j]$, where \( \nextv(\tau_i,\val_i) \) is the \emph{next value} vector defined for each \( 1 \leq j \leq n \) according to the update element \( u_j = \updates(\tau_i)[j].\) See Appendix~\ref{xxx} for details.
% \): i) if \( u_j = \bot\) then \( \nextv(\tau_i,\val_i)[j] = \val_i[j] \); ii) if \( u_j \) is an arithmetic expression, then \( \nextv(\tau_i,\val_i)[j] = u_j(\val_i) \); and iii) if \( u_j \) is a probability distribution, then \( \nextv(\tau_i,\val_i)[j] \) is sampled from \( u_j \). 

This Markov process gives rise to a probability space $(\Run^\pCFG, \mathcal{F}^\pCFG, \probm^\pCFG)$ over the set of all pCFG runs, formally defined via the cylinder construction~\cite{MeynTweedie}. We use $\mathbb{E}^\pCFG$ to denote the expectation operator in this probability space.

\smallskip\noindent{\em Almost-sure termination.} We restrict our attention to almost-surely terminating programs, which is necessary for them to define valid probability distributions on output. A program $P$ (or, equivalently, its pCFG \( \pCFG(P) \)) terminates {\em almost-surely (a.s.)} if $\mathbb{P}^{\pCFG(P)}[\Termset] = 1$. A.s.~termination can be automatically verified in arithmetic PPs, by e.g.~synthesizing a ranking supermartingale (RSM)~\cite{ChakarovS13,ChatterjeeFG16}.

\section{Equivalence Refutation Problem}\label{sec:problem}

%We start by recalling the notion of almost-sure termination of PPs, which is a necessary assumption for PPs to define valid probability distributions on output.

%\smallskip\noindent{\em Almost-sure termination.} A pCFG $\pCFG$ terminates {\em almost-surely (a.s.)} if $\mathbb{P}^\pCFG[\Termset] = 1$. Almost-sure termination can be checked in a fully automated fashion by e.g.~synthesizing a ranking supermartingale (RSM)~\cite{ChakarovS13,ChatterjeeFG16}.

For each variable valuation $\mathbf{x} \in \mathbb{R}^{|V|}$, we use $\mathbf{x}^{\out}$ to denote its projection to the components of output variables $\Vout$. For a terminating run $\rho$ that reaches a terminal state $(\locterm,\mathbf{x})$, we say that $\mathbf{x}^{\out}$ is the {\em output} of \( \rho \). In an a.s. terminating pCFG \( \pCFG \), the probability measure \( \probm^\pCFG  \) over runs naturally extends to a probability measure over sets of outputs, weighted by the cumulative weights of individual runs and normalized by the total expected weight over all runs. Formally, the {\em normalized output distribution (NOD)} of $\pCFG$ is the probability distribution over all output variable valuations $\mathbb{R}^{|\Vout|}$ defined by 
\[ \mu^{\pCFG}[B] = \frac{\mathbb{E}^{\pCFG}\big[\indicator{\Output(B)}\cdot W \big]}{\mathbb{E}^{\pCFG}\big[ W\big]}, \]
where (i) $B$ is any Borel-measurable subset of \( \Rset^{|\Vout|} \), (ii) \( \indicator{A} \) is the indicator variable of a Borel-measurable set of runs \( A \), and (iii):
\[ \Output(B) = \big\{ \rho \in \Run^\pCFG \mid \rho \textit{ reaches a terminal state } (\locterm,\mathbf{x}) \text{ s.t. } \mathbf{x}^{\out} \in B \big\}. \]
For the NOD \( \mu^{\pCFG} \) to be well-defined, we restrict our attention to pCFGs where \(  0 < \mathbb{E}^{\pCFG}\big[ W] < \infty \), in which case we call \( \pCFG \) \emph{integrable} (in line with~\cite{WangYFLO24}).

\smallskip\noindent{\em Problem assumptions.} For a pair of PPs $P^1$ and $P^2$, we assume that: (1)~both $\pCFG(P^1)$ and $\pCFG(P^2)$ are a.s.~terminating and integrable, for their output distributions to be well defined, and (2)~$\pCFG(P^1)$ and $\pCFG(P^2)$ share the same set of output variables $\Vout$, for their output distributions to be defined over the same sample space so that their equivalence is well defined. Our algorithm will automatically check (1), and (2) is trivially checked by comparing the two input pCFGs.

\smallskip\noindent {\bf Problem.} Let $P^1$ and $P^2$ be two PPs with pCFGs $\pCFG(P^1)$ and $\pCFG(P^2)$ being a.s.~terminating, integrable, and having the same set of output variables $\Vout$. Our goal is to prove that $P^1$ and $P^2$ are {\em not (output) equivalent}, i.e.~that there exists a Borel-measurable set $B \subseteq \mathbb{R}^{|\Vout|}$ such that $\mu^{\pCFG(P^1)}[B] \neq \mu^{\pCFG(P^2)}[B]$.

\section{Equivalence Refutation Proof Rule}\label{sec:proofrules}

In this section, we first introduce the weighted restarting transformation, and then use it to formulate a proof rule for equivalence refutation in programs with conditioning. We also show the completeness of this proof rule.

\subsection{Weighted Restarting}\label{sec:restarting}

We introduce weighted restarting, which transforms a PP into output equivalent conditioning-free PP. Our method of weighted restarting requires PPs and their induced pCFGs to satisfy the bounded cumulative weight property.

\smallskip\noindent{\em Assumption: Bounded cumulative weight property.} We say that a program \( P \) (or equivalently, its pCFG \( \pCFG(P) \)) satisfies the {\em bounded cumulative weight property}, if there exists $M > 0$ such that the cumulative weight of every terminating run $\rho$ is bounded from above by $M$, i.e.~$W(\rho) \leq M$. Note that, in an a.s.~terminating pCFG $\pCFG$,  the bounded cummulative weight property also implies the upper bound in the integrability property, since if $W(\rho) \leq M$ holds for all terminating runs then we also have $\mathbb{E}^{\pCFG(P)}[W] \leq M < \infty$. Our algorithm for equivalence refutation in Section~\ref{sec:algorithm} will formally verify this property and compute the value of~$M$. %Formal verification of this property can be fully automated as follows. First, introduce a fresh program variable $W$ in the pCFG. The value of $W$ is initially set to $W=1$ and is updated at each transition by multiplying its value by the value of the weighting function of that transition. Second, generate a program invariant to prove the assertion $W \leq M$ at terminal location $\locterm$, where $M$ is some large constant value. Program invariant generation can be fully automated using existing invariant generation tools, e.g.~\cite{FeautrierG10,SankaranarayananSM04,Chatterjee0GG20}.

\smallskip\noindent{\em Weighted restarting.} Consider a PP $P$ which is a.s.~terminating, integrable, and satisfies the bounded cummulative weight property with bound $M>0$. Weighted restarting produces an output equivalent conditioning-free PP $\Prestarted$.
To achieve this, it introduces a fresh program variable $W$, which we call the {\em weight variable}, to capture all information encoded via observe and score instructions. This allows us to remove all conditioning instructions from the~PP. Then, for the output distribution to be equivalent to the {\em normalized output distribution} of the original PP, it adds a block of code that {\em terminates} a program run with probability proportional to its cumulative weight, and otherwise {\em restarts} it and moves it back to the initial program state. This construction is formalized as follows:
\begin{compactenum}
	\item {\em Introduce weight variable $W$.} Introduce a fresh program variable $W$. The value of $W$ is initially set to $W=1$.
	\item {\em Remove observe instructions.} Each $\textbf{observe}(\phi)$ instruction is replaced by the following block of code, which sets the cumulative weight of a run to~$0$ if the predicate in the observe instruction is false:
	\begin{equation*}
	\begin{split}
		&\textbf{if } \neg\phi:\\
		&\quad\quad W := 0
	\end{split}
	\end{equation*}
	\item {\em Remove score instructions.} Each $\textbf{score}(e)$ instruction is replaced by the following block of code, which multiplies the weight variable $W$ by the value of the expression in the score instruction:
	\begin{equation*}
		W := W \cdot e
	\end{equation*}
	\item {\em Add restarting upon termination.} Finally, the following block of code is added at the end of the PP, which with probability $1 - W/M$ ''restarts'' a program run by moving it back to the initial state and resetting its weight to $W=1$. We write "$\mathbf{x} := \vecinit$" as a shorthand for the block of code that restarts each program variable value to that specified by the initial variable valuation $\vecinit$:
	\begin{equation*}
		\begin{split}
			&\textbf{if prob}(1 - W/M): \\
			&\quad\quad \mathbf{x} := \vecinit \\
			&\quad\quad W := 1 \\
			&\quad\quad \textbf{go-to } \locinit 
		\end{split}
	\end{equation*}
\end{compactenum}
The work~\cite{OlmedoGJKKM18} also proposed a "restarting" procedure for PPs with observe, but without score, instructions. This is achieved by introducing a new boolean variable $\texttt{unblocked}$ whose value is initially true, but is set to false if any $\textbf{observe}(\phi)$ condition is violated. The program is then embedded into a "$\textbf{while } \texttt{unblocked}$:" loop. However, it is not clear how to extend this translation to PPs with score instructions, whereas our method supports both observe and score instructions.

%Note that the weighted restarting procedure does not change the set of output variables of the PP. The following theorem proves its correctness, i.e.~that it indeed gives rise to an output equivalent conditioning-free PP.

\begin{theorem}[Correctness of weighted restarting, proof in Appendix]\label{thm:weightedrestarting}
	Consider a PP $P$ which is a.s.~terminating, integrable, and satisfies the bounded cumulative weight property with bound $M>0$. Let $\Prestarted$ be a conditioning-free PP produced by the weighted restarting procedure. Then, $\Prestarted$ is also a.s.~ terminating, integrable, and satisfies the bounded cumulative weight property. Moreover, $\pCFG(P)$ and $\pCFG(\Prestarted)$ are output equivalent, i.e. \( \mu^{\pcfg(P)} = \mu^{\pcfg(\Prestarted)} \).
\end{theorem}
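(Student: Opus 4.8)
The plan is to establish the four claims about $\Prestarted$ in order: (a) a.s.~termination, (b) the bounded cumulative weight property, (c) integrability, and (d) output equivalence $\mu^{\pcfg(P)} = \mu^{\pcfg(\Prestarted)}$. For (b), observe that in $\Prestarted$ every terminating run reaches $\locterm$ without ever triggering a restart in its final ``pass''; the weight variable $W$ along that pass accumulates exactly the product of the scores/observes encountered, so $W \le M$ by the bounded cumulative weight assumption on $P$, and consequently the cumulative weight of the corresponding run in $\pCFG(\Prestarted)$ (which is $1$ on every transition, since $\Prestarted$ is conditioning-free) is trivially $1 \le 1$ — so the relevant quantity to bound is really the value of the \emph{program variable} $W$, which is $\le M$. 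Note $M>0$ ensures the restart probability $1 - W/M \in [0,1]$ is well-defined.

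The core is (a) together with (d), and the natural device is to analyze $\Prestarted$ as a sequence of i.i.d.~``attempts,'' each attempt being one execution of (a copy of) $P$ from $\locinit$ with $W$ reset to $1$, after which a biased coin with success probability $W/M$ decides whether to output or restart. First I would set up the probability space of $\Prestarted$ and define, for a single attempt, the sub-probability measure it induces on terminal states together with the random variable $W$; by construction this matches the weighted semantics of $\pCFG(P)$, i.e.~for any Borel $B \subseteq \Rset^{|\Vout|}$ the expected value over one attempt of $\indicator{\Output(B)} \cdot (W/M)$ equals $\mathbb{E}^{\pCFG(P)}[\indicator{\Output(B)} \cdot W]/M$, and the probability one attempt ``succeeds'' (terminates without restart) is $q := \mathbb{E}^{\pCFG(P)}[W]/M$, which lies in $(0,1]$ by integrability and $0 < \mathbb{E}^{\pCFG(P)}[W] < \infty \le M$ (after possibly enlarging $M$, or noting $q \le 1$ directly). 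Since $P$ is a.s.~terminating, each attempt almost surely finishes, so $\Prestarted$ performs a sequence of attempts indexed by $\Nset$; the number of attempts until the first success is geometric with parameter $q > 0$, hence finite a.s., giving a.s.~termination of $\Prestarted$ — this is where I would invoke the strong Markov / renewal structure, being careful that the restart block genuinely returns the configuration to $(\locinit, \vecinit)$ with $W = 1$ so the attempts are genuinely i.i.d.

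For (d), conditioning on the index $N$ of the first successful attempt and using independence across attempts, the output of $\Prestarted$ has distribution equal to the law of the output of a single attempt conditioned on that attempt succeeding. Concretely,
\[
\mu^{\pcfg(\Prestarted)}[B] \;=\; \frac{\mathbb{E}_{\text{attempt}}\big[\indicator{\Output(B)}\cdot (W/M)\big]}{\mathbb{E}_{\text{attempt}}\big[W/M\big]} \;=\; \frac{\mathbb{E}^{\pCFG(P)}\big[\indicator{\Output(B)}\cdot W\big]}{\mathbb{E}^{\pCFG(P)}\big[W\big]} \;=\; \mu^{\pcfg(P)}[B],
\]
where the $1/M$ factors cancel; the first equality is the renewal/geometric-sum computation, the second is the attempt-vs-$P$ correspondence, and the third is the definition of the NOD. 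Integrability (c) of $\Prestarted$ then follows since it is conditioning-free, so $W(\rho) = 1$ for all its runs and $\mathbb{E}^{\pCFG(\Prestarted)}[W] = \probm^{\pCFG(\Prestarted)}[\Termset] = 1 \in (0,\infty)$ by (a).

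The main obstacle I anticipate is the bookkeeping in (d): making the ``sequence of i.i.d.~attempts'' picture rigorous within the formal cylinder-set probability space $(\Run^{\pCFG(\Prestarted)}, \mathcal{F}^{\pCFG(\Prestarted)}, \probm^{\pCFG(\Prestarted)})$ — i.e.~defining the stopping times that delimit successive attempts, proving the strong Markov property applies at these restart points, and justifying the interchange of the infinite geometric sum with expectation (dominated convergence, using $W/M \le 1$ and $0 \le \indicator{\Output(B)} \le 1$). The probabilistic content is elementary, but the measure-theoretic scaffolding connecting the weighted Markov-chain semantics of $P$ to the restart structure of $\Prestarted$ is the step that needs care; everything else (the syntactic correctness of the translation, the weight bound, the well-definedness of the restart probability) is routine once $M$ is fixed with $M \ge \mathbb{E}^{\pCFG(P)}[W]$ and $M \ge W(\rho)$ for all terminating~$\rho$.
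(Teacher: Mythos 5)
Your proposal is correct and follows essentially the same route as the paper's proof: the paper likewise decomposes runs of $\Prestarted$ into i.i.d.\ restart cycles followed by a final pass, gets a.s.\ termination from the restart probability $\prestart = 1 - \mathbb{E}^{\pCFG(P)}[W]/M < 1$ together with a.s.\ termination of each pass, and derives output equivalence from the same geometric-sum computation in which the $1/M$ factors cancel, with integrability and the weight bound for $\Prestarted$ being immediate from conditioning-freeness. The measure-theoretic bookkeeping you flag is handled in the paper at the same level of rigor you sketch (independence of successive cycles and countable additivity over the number of restarts), so no gap remains.
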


\begin{example}
	Fig.~\ref{fig:restarting} shows the PPs produced by the weighted restarting procedure applied to PPs in Fig.~\ref{fig:running}. Notice that the PPs in Fig.~\ref{fig:running} are a.s.~terminating, integrable, and satisfy the bounded cumulative weight property with $M=1000$, as $y \leq 1000$ holds upon loop termination. Weighted restarting produces the PPs in Fig.~\ref{fig:restarting}. %as follows. In the first step, the weight variable $W$ is introduced and is initialized to $1$. Since there are no \textbf{observe} instructions in the PPs, the second step results in no changes to the programs. In the third step, the \textbf{score} instructions in line~$6$ are replaced by the updates to the weight variable $W$. Finally,
	In the fourth step of the procedure, the restarting upon termination block of code is added to each program in lines~$7-10$. It can be realized using a single pCFG transition, hence we only need one pCFG location~$\loc_8$ for it.
\end{example}

\begin{figure}[t]
	\begin{minipage}[c]{0.45\textwidth}
		\begin{lstlisting}[mathescape]
	$\locinit$:  $x,y,c,r = 0, \textcolor{red}{W=0}$
	$\loc_1$:  while $c \leq 999$:
	$\loc_2$:		if prob($0.5$):
	$\loc_3$:			$\textcolor{blue}{\textit{r} := \textbf{Beta}(2,2)}$
	$\loc_4$:			$x:= x - r$
			else:
	$\loc_5$:			$y := y + 1$
	$\loc_6$:		$c = c + 1$
	$\loc_7$:   $\textcolor{red}{W := W \cdot y}$
	$\loc_8$:   $\textcolor{red}{\textbf{if prob}(1-W/1000):}$
			$\textcolor{red}{x,y,c,r := 0}$
			$\textcolor{red}{W := 1}$
			$\textcolor{red}{\textbf{go-to } \locinit}$
	$\locterm$: return($x,y$)
		\end{lstlisting}
	\end{minipage}
	\begin{minipage}[c]{0.45\textwidth}
		\begin{lstlisting}[mathescape]
		$\locinit$: $x,y,c,r = 0, \textcolor{red}{W=0}$
		$\loc_1$:  while $c \leq 999$:
		$\loc_2$:		if prob($0.5$):
		$\loc_3$:			$\textcolor{blue}{\textit{r} := \textbf{Beta}(2,3)}$
		$\loc_4$:			$x:= x - r$
				else:
		$\loc_5$:			$y := y + 1$
		$\loc_6$:		$c = c + 1$
		$\loc_7$:   $\textcolor{red}{W := W \cdot y}$
		$\loc_8$:   $\textcolor{red}{\textbf{if prob}(1-W/1000):}$
				$\textcolor{red}{x,y,c,r := 0}$
				$\textcolor{red}{W := 1}$
				$\textcolor{red}{\textbf{go-to } \locinit}$
		$\locterm$: return($x,y$)
		\end{lstlisting}
	\end{minipage}
	\vspace{-1em}
	\caption{PPs produced by weighted restarting applied to the PPs in Fig.~\ref{fig:running}. The changes introduced by the weighted restarting procedure are highlighted in red.}
	\label{fig:restarting}
\end{figure}

\subsection{Equivalence Refutation in PPs with Conditioning}
\label{sec:esmrecap}

Theorem~\ref{thm:weightedrestarting} effectively reduces the task of equivalence refutation for a PP pair \( (P^1, P^2) \) into the respective task for \( (\Prestarted^1, \Prestarted^2) \) of PPs without conditioning. Following such a reduction, we can apply the refutation rule from~\cite{chatterjee2024equivalence}. We briefly revisit this proof rules. We additionally show its completeness in settings both with and without conditioning, a result not proven previously.

In what follows, let  $\pCFG=(\locs,V,\Vout,\locinit,\vecinit,\transitions,\guards,\updates,\weights,\locterm)$ be an a.s.~terminating pCFG. We use the following terminology:
\begin{compactitem}
	\item A {\em state function} $\lem$ in $\pCFG$ is a function which to each location $\loc \in \locs$ assigns a Borel-measurable function $\lem(\loc):\mathbb{R}^{|V|} \rightarrow \mathbb{R}$ over program variables.
	\item A {\em predicate function} $\Pi$ in $\pCFG$ is a function which to each location $\loc \in \locs$ assigns a predicate $\Pi(\loc)$ over program variables. It naturally induces a set of states $\{(\loc,\mathbf{x}) \mid \mathbf{x} \models \Pi(\loc)\}$, which we also refer to via $\Pi$. \( \Pi \) is an {\em invariant} if it contains all reachable states in $\pCFG$, i.e.~if $\mathbf{x} \models I(\loc)$ for each $(\loc,\mathbf{x}) \in \Reach^{\pCFG}$.
\end{compactitem}

%\subsection{Recap: Expectation Super/Submartingales}

%We start by recalling the notions of UESMs and LESMs of~\cite{chatterjee2024equivalence} in an a.s.~terminating pCFG $\pCFG$. Since the definitions of UESMs and LESMs consider pCFGs rather than PPs, these notions remain well defined even in the setting of pCFGs induced by PPs with conditioning instructions.

\smallskip
\noindent
Let $f: \mathbb{R}^{|\Vout|} \rightarrow \mathbb{R}$ be a Borel-measurable function over the pCFG outputs. A UESM (resp.~LESM) for $f$ is a state function $U_f$ (resp.~$L_f$) that satisfies certain conditions in every reachable state which together ensure that it evaluates to an upper (resp.~lower) bound on the expected value of $f$ on the pCFG output. Since it is generally not feasible to exactly compute the set of reachable states, UESMs and LESMs are defined with respect to a supporting invariant that over-approximates the set of all reachable states in the pCFG. This is done in order to later allow for a fully automated computation of UESMs and LESMs in PPs. 

\begin{definition}[Upper expectation supermartingale (UESM)~\cite{chatterjee2024equivalence}]\label{def:uesm}
	Let $\pCFG$ be an a.s.~terminating pCFG, $I$ be an invariant in $\pCFG$ and $f:\mathbb{R}^{|\Vout|} \rightarrow \mathbb{R}$ be a Borel-measurable function over the output variables of $\pCFG$. An {\em upper expectation supermartingale (UESM) for $f$} with respect to the invariant $I$ is a state function $U_f$ satisfying the following two conditions:
	\begin{compactenum}
		\item {\em Zero on output.} For every $\mathbf{x} \models I(\locterm)$, we have $U_f(\locterm,\mathbf{x}) = 0$.
		\item {\em Expected $f$-decrease.} For every location $\loc\in\locs$ , transition $\tau = (\loc,\loc') \in\,\transitions$, and variable valuation \( \val \) s.t. $\val \models I(\loc) \land \guards(\tau)$, we require the following: denoting by \( \veca{N} \) the expected valuation vector after performing transition \( \tau \) from state \( (\loc,\vec{x}) \), it holds
		\begin{equation}
			\label{eq:ufexp}
			U_f(\loc,\mathbf{x}) + f(\mathbf{x}^{\out})\geq \E\Big[U_f(\ell', \veca{N}) + f(\veca{N}^\out) \Big]
		\end{equation}
		with \( \veca{N}^\out \) the projection of the random vector \( \veca{N} \) onto \( V_\out \)-indexed variables.
	\end{compactenum}
\end{definition}

\noindent Intuitively, an UESM is required to be equal to $0$ on output and, in every computation step, any increase in the \( f \)-value is {\em exceeded in expectation} by the decrease in the UESM value. A \emph{lower expectation submartingale} (LESM) \( L_f \) for \( f \) is defined analogously, with \eqref{eq:ufexp} replaced by the dual \emph{expected \( f \)-increase} condition:

%\begin{definition}[Lower expectation submartingale (LESM)~\cite{chatterjee2024equivalence}]\label{def:lesm}
%	Let $\pCFG$ be an a.s.~terminating pCFG, $I$ be an invariant in $\pCFG$ and $f:\mathbb{R}^{|\Vout|} \rightarrow \mathbb{R}$ be a Borel-measurable function over the output variables of $\pCFG$. A {\em lower expectation submartingale (LESM) for $f$} with respect to the invariant $I$ is a state function $L_f$ satisfying the following two conditions:
%	\begin{compactenum}
%		\item {\em Zero on output.} For every $\mathbf{x} \models I(\locterm)$, we have $L_f(\locterm,\mathbf{x}) = 0$.
%		\item {\em Expected $f$-increase.} For every location $\loc\in\locs$ , transition $\tau = (\loc,\loc') \in\,\transitions$, and variable valuation \( \val \) s.t. $\val \models I(\loc) \land \guards(\tau)$, we require the following: for \( \veca{N} = \nextv(\tau,\val)\) it holds
		\begin{equation}
			\label{eq:lfexp}
			L_f(\loc,\mathbf{x}) + f(\mathbf{x}^{\out})\leq \E\Big[L_f(\ell', \veca{N}) + f(\veca{N}^\out) \Big].
		\end{equation}
%	\end{compactenum}
%\end{definition}

\begin{example}\label{ex:uesm}
	Consider the two PPs obtained by weighted restarting in Fig.~\ref{fig:restarting}. Their output variables are $\Vout = \{x,y\}$. By Theorem~\ref{thm:weightedrestarting}, both PPs are a.s.~terminating. For a function over outputs $f(x,y) = x+y$, one can check by inspection that the state function $U_f$ below defines a UESM for $f$ in the first program, and the state function $L_f$ below defines an LESM for $f$ in the second program. These are also the U/LESMs produced by our prototype implementation:
	\begin{equation*}
	\resizebox{\hsize}{!}{
		$U_f \begin{pmatrix}
			x,\\
			y,\\
			c,\\
			r,\\
			W
		\end{pmatrix}
		= \begin{cases}
			500 - 0.5 \cdot y, &\text{if } \loc \in \{\locinit, \loc_1, \loc_2, \loc_6, \loc_7, \loc_8 \} \\
			499.5 - 0.5 \cdot y, &\text{if } \loc = \loc_3 \\
			500 - 0.5 \cdot y - r, &\text{if } \loc = \loc_4 \\
			500.5 - 0.5 \cdot y, &\text{if } \loc = \loc_5 \\
			0, &\text{if } \loc = \locterm
		\end{cases}
		\quad
		L_f\begin{pmatrix}
			x,\\
			y,\\
			c,\\
			r,\\
			W
		\end{pmatrix}
		= \begin{cases}
			600 - 0.6 \cdot y, &\text{if } \loc \in \{\locinit, \loc_1, \loc_2, \loc_6, \loc_7, \loc_8 \} \\
			599.6 - 0.6 \cdot y, &\text{if } \loc = \loc_3 \\
			600 - 0.6 \cdot y - r, &\text{if } \loc = \loc_4 \\
			600.4 - 0.6 \cdot y, &\text{if } \loc = \loc_5 \\
			0, &\text{if } \loc = \locterm
		\end{cases}$
	}
	\end{equation*}
\end{example}

\smallskip U/LESMs can be used to refute equivalence in conditioning-free PPs under so-called {\em OST-soundness conditions.} The name is due to the fact that correctness of the proof rule in~\cite{chatterjee2024equivalence} is proven via Optional Stopping Theorem (OST)~\cite{Williams91} from martingale theory. The first three conditions below are imposed by the classical OST, and the fourth condition is derived from the Extended OST~\cite{Wang0GCQS19}.

\begin{definition}[OST-soundness]\label{def:ostsound}
	Let \( \pCFG \) be a pCFG, \( \eta \) be a state function in \( \pCFG \), and \( f \colon \Rset^{|V_\out|} \rightarrow \Rset\) be a Borel measurable function. Let \( S_i \) be the \( i \)-th state and \( \veca{x}_i \) the \( i \)-th variable valuation along a run. Define \( Y_i\) by \( Y_i := \eta(S_i) + f(\veca{x}_i^\out)\) for any \( i \geq 0 \). We say that the tuple \( (\pCFG, \eta, f) \) is \emph{OST-sound} if \( \E[|Y_i|] < \infty \) for every \( i\geq 0 \) and moreover, at least one of the following conditions (C1)--(C4) holds:
	\begin{compactitem}
		\item[(C1)] There exists a constant \( c \) such that \( \TimeTerm \leq c \) with probability 1.
		\item[(C2)] There exists a constant \( c \) such that for each \( t \in \Nset \) and run \( \rho \) we have $|Y_{\min\{t, \TimeTerm(\rho)\}}(\rho)| \leq c$ (i.e., \( Y_{i}(\rho) \) is uniformly bounded along the run).
		\item[(C3)] \( \E[\TimeTerm] < \infty \) and there exists a constant \( c \) such that for every \( t \in \Nset \) it holds \( \E[|Y_{t+1}-Y_{t}|\mid \sigmaAlg_t] \leq c \) (i.e., the expected one-step change of \( Y_i \) is uniformly bounded over runtime).
		\item [(C4)] There exist real numbers \( M, c_1, c_2, d \) such that (i) for all sufficiently large \( n \in \Nset \) it holds \( \probm(\TimeTerm > n) \leq c_1 \cdot e^{-c_2 \cdot n} \); and (ii) for all \( t \in \Nset \) it holds \( |Y_{n+1} - Y_n| \leq M\cdot n^d \).
	\end{compactitem}
\end{definition}

We now present our new proof rule for equivalence refutation in PPs with conditioning. Given a pair \( P^1,P^2 \) of a.s.~terminating PPs that satisfy the bounded cumulative weight property, our proof rule applies the weighted restarting procedure to produce the pCFGs \( \pCFG(\Prestarted^1) \) and \( \pCFG(\Prestarted^2) \) and then applies the proof rule of~\cite{chatterjee2024equivalence} for PPs without conditioning. The following theorem formalizes the instantiation of the proof rule of~\cite{chatterjee2024equivalence} to the pCFG pair \( \pCFG(\Prestarted^1) \) and \( \pCFG(\Prestarted^2) \).

\begin{theorem}[Equivalence refutation for PPs with conditioning]
\label{thm:newproofrule}
Let \( P^1,P^2 \) be two a.s. terminating programs (possibly with conditioning) satisfying the bounded cumulative weight property. Let the initial states of \( \pCFG(\Prestarted^1) \) and \( \pCFG(\Prestarted^2) \) be \( (\locinit^1,\vecinit^1) \) and \( (\locinit^2,\vecinit^2) \), respectively.
Assume that there exist a Borel-measurable function \( f \colon \Rset^{|V_\out|} \rightarrow \Rset \) and two state functions, \( U_f \) for \( \pCFG(\Prestarted^1) \) and \( L_f \) for \( \pCFG(\Prestarted^2) \) such that the following holds:
\begin{compactitem}
\item[i)] \( U_f \) is a UESM for \( f \) in \( \pCFG(\Prestarted^1) \) such that \( (\pCFG(\Prestarted^1), U_f, f) \) is OST-sound;
\item[ii)] \( L_f \) is an LESM for \( f \) in \( \pCFG(\Prestarted^2) \) such that \( (\pCFG(\Prestarted^2), L_f, f) \) is OST-sound;
\item[iii)] \( U_f(\locinit^1, \vecinit^1) + f((\vecinit^1)^\out) < L_f(\locinit^2, \vecinit^2) + f((\vecinit^2)^\out) \).
\end{compactitem}
Then \( P^1,P^2 \) are not output equivalent,
%
%Moreover, if \( f \) is \( 1 \)-Lipschitz continuous under a metric \( d \) over \( \Rset^{|V_\out|} \), then 
%$$\K_d\Big(\mu^{\pCFG(P^1)},\mu^{\pCFG(P^2)}\Big) \geq L_f(\locinit^2, \vecinit^2) + f((\vecinit^2)^\out) - U_f(\locinit^1, \vecinit^1) - f((\vecinit^1)^\out).$$
\end{theorem}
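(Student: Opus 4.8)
\emph{Proof plan.} The plan is to derive the statement by composing two results already available in the excerpt: the output-equivalence produced by weighted restarting (Theorem~\ref{thm:weightedrestarting}), and the soundness of U/LESMs as one-sided bounds on \( \mathbb{E}_{\mu^{\pCFG}}[f] \) for conditioning-free pCFGs, which is precisely the proof rule of~\cite{chatterjee2024equivalence}. Concretely, the first step is to apply Theorem~\ref{thm:weightedrestarting} to each of \( P^1 \) and \( P^2 \): since both are a.s.~terminating and satisfy the bounded cumulative weight property, the restarted programs \( \Prestarted^1 \) and \( \Prestarted^2 \) are conditioning-free, a.s.~terminating, and integrable, and they satisfy \( \mu^{\pCFG(P^1)} = \mu^{\pCFG(\Prestarted^1)} \) and \( \mu^{\pCFG(P^2)} = \mu^{\pCFG(\Prestarted^2)} \). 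This reduces the goal to showing \( \mu^{\pCFG(\Prestarted^1)} \neq \mu^{\pCFG(\Prestarted^2)} \), where now every pCFG in play is conditioning-free and the rule of~\cite{chatterjee2024equivalence} applies verbatim.

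The second step is to invoke the quantitative content of that rule: for a conditioning-free, a.s.~terminating, integrable pCFG \( \pCFG \) with initial state \( (\locinit,\vecinit) \), if \( U_f \) is a UESM for \( f \) and \( (\pCFG, U_f, f) \) is OST-sound, then \( f \) is \( \mu^{\pCFG} \)-integrable and \( \mathbb{E}_{\mu^{\pCFG}}[f] \leq U_f(\locinit,\vecinit) + f(\vecinit^\out) \); dually, an OST-sound LESM \( L_f \) gives \( \mathbb{E}_{\mu^{\pCFG}}[f] \geq L_f(\locinit,\vecinit) + f(\vecinit^\out) \). If a self-contained derivation is preferred over a citation, this follows from the Optional Stopping Theorem applied to the process \( Y_i = \eta(S_i) + f(\veca{x}_i^\out) \): the expected-\( f \)-decrease (resp.~increase) condition is exactly the statement that \( Y_i \) is a supermartingale (resp.~submartingale); one of (C1)--(C4) licenses OST, so \( \mathbb{E}[Y_0] \geq \mathbb{E}[Y_{\TimeTerm}] \) (resp.~\( \leq \)) using a.s.~termination; the zero-on-output condition gives \( Y_{\TimeTerm} = f(\veca{x}_{\TimeTerm}^\out) \), and since \( \pCFG \) is conditioning-free its NOD is just the law of the output of a terminating run, whence \( \mathbb{E}[Y_{\TimeTerm}] = \mathbb{E}_{\mu^{\pCFG}}[f] \); integrability of \( f \) under \( \mu^{\pCFG} \) comes from \( \mathbb{E}[|Y_i|]<\infty \) together with the convergence supplied by OST.

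The third step is to chain the bounds. Applying the second step to \( (\pCFG(\Prestarted^1), U_f) \) and to \( (\pCFG(\Prestarted^2), L_f) \), and using hypothesis~(iii),
\[
\mathbb{E}_{\mu^{\pCFG(\Prestarted^1)}}[f] \;\leq\; U_f(\locinit^1,\vecinit^1) + f((\vecinit^1)^\out) \;<\; L_f(\locinit^2,\vecinit^2) + f((\vecinit^2)^\out) \;\leq\; \mathbb{E}_{\mu^{\pCFG(\Prestarted^2)}}[f].
\]
By the output-equivalences from the first step this reads \( \mathbb{E}_{\mu^{\pCFG(P^1)}}[f] \neq \mathbb{E}_{\mu^{\pCFG(P^2)}}[f] \); since \( f \) is integrable under both NODs, equal distributions would force equal expectations, so the two NODs must disagree on some Borel set \( B \subseteq \Rset^{|\Vout|} \), i.e.~\( P^1 \) and \( P^2 \) are not output equivalent.

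I do not expect a genuine obstacle, since the theorem is by design an instantiation of existing machinery; the only place that needs care is the interface between the two ingredients — confirming that \( \pCFG(\Prestarted^1) \) and \( \pCFG(\Prestarted^2) \) meet exactly the hypotheses required by the cited soundness result (conditioning-freeness, a.s.~termination, integrability, all delivered by Theorem~\ref{thm:weightedrestarting}), and that \( f \) is genuinely \( \mu \)-integrable under both restarted NODs, which is what the \( \mathbb{E}[|Y_i|]<\infty \) clause of OST-soundness provides and what makes the concluding ``equal distributions imply equal expectations'' step legitimate.
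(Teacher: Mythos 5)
Your proposal is correct and follows essentially the same route as the paper: the paper's proof is exactly the two-step composition of Theorem~\ref{thm:weightedrestarting} (output equivalence of the restarted programs) with the soundness of the U/LESM proof rule of~\cite[Theorem~5.5]{chatterjee2024equivalence}, stated in one sentence with a footnote handling the go-to statements introduced by restarting. Your additional unfolding of the OST argument for the expectation bounds is a correct expansion of what the paper delegates to the citation, so there is no substantive difference in approach.
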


%\begin{proof}
%If \( P^1, P^2 \) are a.s. terminating, then so are \(  \Prestarted^1, \Prestarted^2 \). The rest follows directly from Theorems~\ref{thm:ulesm-refute} and~\ref{thm:newWhat proofrule}.
%\end{proof}

The theorem proof follows immediately by the correctness of weighted restarting (Theorem~\ref{thm:weightedrestarting}) and the correctness of the proof rule in~\cite[Theorem~5.5]{chatterjee2024equivalence}. \footnote{We remark that the program syntax considered in~\cite{chatterjee2024equivalence} does not include go-to statements. However, the definition of U/LESMs and the proof rule in~\cite{chatterjee2024equivalence} are formalized with respect to pCFGs. Each go-to statement in a program can simply be modeled as a new edge in the program’s pCFG. Since~\cite{chatterjee2024equivalence} does not make any restriction on the topology of pCFGs, their result remains correct even for pCFGs induced by conditioning-free PPs with go-to statements.}

While~\cite{chatterjee2024equivalence} establishes soundness of the above proof rule for PPs {\em without conditioning}, which here we generalize to PPs {\em with conditioning}, to the best of our knowledge there is no known proof rule that is also complete. In what follows, we show that the proof rule in Theorem~\ref{thm:newproofrule} is not only sound, but also complete. This result is new both in the setting of PPs with and without conditioning.

%the equivalence refutation proof rules in both Theorems~\ref{thm:proofrule} and~\ref{thm:newproofrule} are not only sound, but also complete. We proof this for the new proof rule, since it strictly generalizes the old one.

\begin{theorem}[Completeness of the proof rule]\label{thm:completeness}
Let \( P^1, P^2 \) be a two a.s. terminating programs satisfying the bounded cumulative weight property. Assume that they are not output equivalent. Then there exists a Borel-measurable function \( f \colon \Rset^{|V_\out|} \rightarrow \Rset \) and two state functions, \( U_f \) for $f$ in \( \pCFG(\Prestarted^1) \) and \( L_f \) for $f$ in \( \pCFG(\Prestarted^2) \), satisfying the conditions of Theorem~\ref{thm:newproofrule}.
\end{theorem}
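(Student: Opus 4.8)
The plan is to exhibit an explicit witness $(f, U_f, L_f)$ using the difference of the normalized output distributions. By Theorem~\ref{thm:weightedrestarting}, $\pCFG(\Prestarted^1)$ and $\pCFG(\Prestarted^2)$ are a.s.~terminating, integrable, and (being conditioning-free) have all transition weights equal to $1$, so their NODs coincide with the ordinary output distributions $\mu^{\pCFG(\Prestarted^i)}$. Since $P^1,P^2$ are not output equivalent, $\mu^{\pCFG(\Prestarted^1)} \neq \mu^{\pCFG(\Prestarted^2)}$, hence there is a Borel set $B \subseteq \Rset^{|\Vout|}$ with $\mu^{\pCFG(\Prestarted^1)}[B] \neq \mu^{\pCFG(\Prestarted^2)}[B]$. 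Take $f = \indicator{B}$, i.e. $f(\val^\out) = 1$ if $\val^\out \in B$ and $0$ otherwise; then $f$ is bounded and Borel-measurable, and $\E^{\pCFG(\Prestarted^i)}[f((\val_{\TimeTerm})^\out)] = \mu^{\pCFG(\Prestarted^i)}[B]$. Without loss of generality assume $\mu^{\pCFG(\Prestarted^1)}[B] < \mu^{\pCFG(\Prestarted^2)}[B]$ (otherwise swap the roles of $P^1$ and $P^2$, which also swaps UESM/LESM).

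Next I would define the canonical U/LESMs as conditional-expectation "gap" functions. For $i=1$, set
\[
U_f(\loc,\val) \;=\; \E^{\pCFG(\Prestarted^1)}\!\big[\, f((\val_{\TimeTerm})^\out) \,\big|\, S_k = (\loc,\val)\,\big] \;-\; f(\val^\out),
\]
i.e. the expected value of $f$ at output conditioned on currently being in state $(\loc,\val)$, minus the current $f$-value. By the Markov property this is well-defined independently of the time index $k$ and of the history, as a function of the state alone, and on terminal states it equals $f(\val^\out) - f(\val^\out) = 0$, giving the zero-on-output condition. The expected $f$-decrease condition \eqref{eq:ufexp} holds with equality here: conditioning on one more step and using the tower property, $U_f(\loc,\val) + f(\val^\out) = \E[U_f(\loc',\veca{N}) + f(\veca{N}^\out)]$, so in particular the supermartingale inequality holds. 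Define $L_f$ for $\pCFG(\Prestarted^2)$ by the same formula (w.r.t.\ $\probm^{\pCFG(\Prestarted^2)}$); it is then an LESM, again with \eqref{eq:lfexp} holding with equality. Evaluating at the initial states gives $U_f(\locinit^1,\vecinit^1) + f((\vecinit^1)^\out) = \mu^{\pCFG(\Prestarted^1)}[B]$ and $L_f(\locinit^2,\vecinit^2) + f((\vecinit^2)^\out) = \mu^{\pCFG(\Prestarted^2)}[B]$, so condition (iii) of Theorem~\ref{thm:newproofrule} is exactly the chosen strict inequality $\mu^{\pCFG(\Prestarted^1)}[B] < \mu^{\pCFG(\Prestarted^2)}[B]$.

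The remaining obligation, and the part I expect to be the main obstacle, is verifying OST-soundness of $(\pCFG(\Prestarted^i), \eta, f)$ for $\eta \in \{U_f, L_f\}$ — in particular the integrability $\E[|Y_k|] < \infty$ and one of (C1)--(C4). The key point is that $f$ is bounded, say $|f| \le 1$, and by construction $|\,\eta(\loc,\val)\,| = |\,\E[f(\val_{\TimeTerm}^\out)\mid S_k=(\loc,\val)] - f(\val^\out)\,| \le 2$; hence $Y_k = \eta(S_k) + f(\val_k^\out)$ satisfies $|Y_k| \le 3$ uniformly along every run, which is precisely condition (C2) (and it trivially implies $\E[|Y_k|]<\infty$). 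So OST-soundness holds for free from the boundedness of $f$ together with the canonical definition of the U/LESMs. Assembling: $f=\indicator{B}$, the two conditional-gap state functions, and the strict inequality at the initial states together satisfy (i), (ii), (iii) of Theorem~\ref{thm:newproofrule}, which is the claim. (A minor subtlety to address in the write-up is that $U_f, L_f$ as defined are genuine state functions — Borel-measurable in $\val$ for each $\loc$ — which follows from standard measurability of conditional expectations in the canonical Markov-chain probability space built via the cylinder construction.)
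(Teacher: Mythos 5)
Your proposal is correct and follows essentially the same route as the paper: the paper also takes $f=\indicator{B}$ and sets $U_f(\loc,\val)=p^1_B(\loc,\val)-f(\val^\out)$, $L_f(\loc,\val)=p^2_B(\loc,\val)-f(\val^\out)$, where $p^i_B$ is the probability of outputting in $B$ when started from $(\loc,\val)$ — exactly your conditional-expectation "gap" functions (defined via re-initializing the chain at the state, which sidesteps conditioning on measure-zero events) — and verifies zero-on-output, the flow-conservation equality, (C2) OST-soundness from boundedness, and condition (iii) at the initial states.
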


\begin{proof}
Since \( \mu^{\pCFG(P^1)} \neq \mu^{\pCFG(P^2)}\), by Theorem~\ref{thm:weightedrestarting} also \( \mu^{\pCFG(\Prestarted^1)} \neq \mu^{\pCFG(\Prestarted^2)}\). Hence, there is a Borel-measurable set \( B \subseteq \mathbb{R}^{\Vout} \) such that, w.l.o.g., \( \mu^{\pCFG(\Prestarted^1)}[B] > \mu^{\pCFG(\Prestarted^2)}[B] \). Let $p^i_B(\loc,\vec{x}) = \probm^{\pCFG(\Prestarted^i)}[\Output(B) \mid \text{run initialized in } (\loc,\vec{x})]$ be the probability of outputting a vector from \( B \) if \( \pCFG(\Prestarted^i) \) starts in \( (\loc,\vec{x}) \), for $i \in \{1,2\}$. We define:
\begin{compactitem}
\item A function \( f = \indicator{B} \) over outputs;
\item A state function \( U_f(\loc,\vec{x}) = p^1_B(\loc,\vec{x}) - f(\vec{x}^\out) \) in \( \pCFG(\Prestarted^1) \); %where
%\[ p^1_B(\loc,\vec{x}) = \probm^{\pCFG(\Prestarted^1)}[\Output(B) \mid \text{run initialized in } (\loc,\vec{x})] \]
%is the probability of outputting a vector from \( B \) if \( \pCFG(\Prestarted^1) \) starts in \( (\loc,\vec{x}) \).
\item A state function \( L_f(\loc,\vec{x}) = p^2_B(\loc,\vec{x}) - f(\vec{x}^\out) \) in \( \pCFG(\Prestarted^2) \).%, where \( p^2_B(\loc,\vec{x}) \) is defined as above.%, but in \( \pCFG(\Prestarted^2) \) instead of \( \pCFG(\Prestarted^1) \).
\end{compactitem}
We prove that \( U_f \) is an UESM $f$ in \( \pCFG(\Prestarted^1) \), the argument for \( L_f \) is analogous.
For any terminal state \( (\loc,\vec{x}) \) it holds \( p^1_b(\loc,\vec{x}) = \indicator{B}(\vec{x}^\out) = f(\vec{x}^\out) \), hence \( U_f \) is zero on output. Moreover, for any state \( (\loc,\vec{x}) \) it holds \( U_f(\loc,\vec{x}) + f(\vec{x}^\out) = p^1_B(\loc,\vec{x}) \). Since \( p^1_B(\loc,\vec{x}) = \mathbb{E}[p^1_B(\loc',\mathbf{N})]  \) (the standard flow conservation of reachability probabilities), we get that \( U_f \) satisfies the expected \( f \)-decrease condition. 

Regarding OST soundness, note that \( U_f(\loc,\vec{x}) + f(\vec{x}^\out) = p_B^1\in [0,1] \) for any state \( (\loc,\vec{x}) \). Hence, \( (\pCFG(\Prestarted^1),U_f,f) \) is OST-sound -- it satisfies condition (C2) -- and similarly for \( (\pCFG(\Prestarted^2),L_f,f) \).

It remains to verify condition iii) from Theorem~\ref{thm:newproofrule}. We have
\begin{align*}
U_f(\locinit^1, \vecinit^1) + f((\vecinit^1)^\out) &= p^1_B(\locinit,\vecinit) = \mu^{\pCFG(\Prestarted^1)}[B] < \mu^{\pCFG(\Prestarted^2)}[B] \\ &= p^2_B(\locinit,\vecinit) = L_f(\locinit^1, \vecinit^1) + f((\vecinit^1)^\out).
\end{align*}
\end{proof}

\begin{example}
	To conclude this section, we illustrate our proof rule on our running example in Fig.~\ref{fig:running}. Consider the PPs $(\Prestarted^1,\Prestarted^2)$ obtained by weighted restarting in Fig.~\ref{fig:restarting}, and the function $f$ over outputs, the UESM $U_f$ for $f$ in $\pCFG(\Prestarted^1)$, and the LESM $L_f$ for $f$ in $\pCFG(\Prestarted^2)$ constructed in Example~\ref{ex:uesm}. The triples $(\pCFG(\Prestarted^1),U_f,f)$ and $(\pCFG(\Prestarted^2),L_f,f)$ are both OST-sound and satisfy (C2) in Definition~\ref{def:ostsound}, as all variable values in $\Prestarted^1$ and $\Prestarted^2$ are bounded. Moreover, we have $U_f(\locinit^1, \vecinit^1) + f((\vecinit^1)^\out) = 500 < 600 = L_f(\locinit^2, \vecinit^2) + f((\vecinit^2)^\out)$. Hence, $(f,U_f,L_f)$ satisfy all conditions of Theorem~\ref{thm:newproofrule}, and PPs in Fig.~\ref{fig:running} are not output equivalent. 
\end{example}

\section{Automated Equivalence Refutation}\label{sec:algorithm}

We now present our algorithm for equivalence refutation in PPs that may contain conditioning instructions. Given a PP pair \( (P^1, P^2) \), the algorithm first applies weighted restarting to translate them into output equivalent conditioning-free PPs \( (\Prestarted^1, \Prestarted^2) \). Then, the algorithm applies the equivalence refutation procedure of~\cite{chatterjee2024equivalence} for PPs without conditioning. Hence, given that our algorithm builds on that of~\cite{chatterjee2024equivalence}, we only present an overview of the algorithm and refer the reader to~\cite{chatterjee2024equivalence} for details. However, this section presents two important novel results:
\begin{compactenum}
	\item {\em Semi-completeness guarantees.} Building on our result in Section~\ref{sec:proofrules} which shows that our proof rule is sound and complete, we show that our algorithm for equivalence refutation is sound and {\em semi-complete}. In particular, the algorithm is guaranteed to refute equivalence of PPs whenever non-equivalence can be witnessed by a function $f$ and a pair of U/LESMs that can be specified via polynomial functions of some maximal polynomial degree $D$, and that satisfy OST-soundness condition (C2) in Definition~\ref{def:ostsound}.
	\item {\em Streamlined algorithm for bounded termination time PPs.} We show that the algorithm can be streamlined for the class of PPs whose (1)~termination time is bounded by some constant value $T > 0$, and (2)~all sampling instructions consider probability distributions of bounded support. The first assumption is common and satisfied by most PPs considered in statistical inference literature, including all our benchmarks in Section~\ref{sec:experiments}.
\end{compactenum}

%In what follows, let $\pCFG_1=(\locs^1,V^1,\Vout,\locinit^1,\vecinit^1,\transitions^1,\guards^1,\updates^1,\locfail^1,\locterm^1)$ and $\pCFG_2=(\locs^2,V^2,\Vout,\locinit^2,\vecinit^2,\transitions^2,\guards^2,\updates^2,\locfail^2,\locterm^2)$ be two pCFGs with a common set of output variables $\Vout$. Our algorithm builds on and extends the algorithm of~\cite{chatterjee2024equivalence} for refuting equivalence of PPs without conditioning instructions. Hence, we keep the exposition brief and defer the details to Appendix~\ref{app:algo}.

\smallskip\noindent{\em Algorithm assumptions.} In order to allow for a fully automated equivalence refutation, we need to impose several assumptions on our PPs:
\begin{compactitem}
	\item {\em Polynomial arithmetic.} We require all arithmetic expressions appearing in PPs to be polynomials over program variables. This restriction is necessary for the full automation of our method. We also require that each probability distribution appearing in sampling instructions in pCFGs  has {\em  finite moments}, i.e. for each \( p \in \Nset \), the \( p \)-th moment $m_{\delta}(p) = \mathbb{E}_{X \sim \delta}[|X|^p]$ is finite and can be computed by the algorithm. This is a standard assumption in static PP analysis and allows for most standard probability distributions.
	\item {\em Almost-sure termination.} For the equivalence refutation problem to be well-defined and for our proof rule to be applicable, we require both PPs to be a.s.~terminating. In polynomial arithmetic PPs, this requirement can be {\em automatically checked} by using the method of~\cite{ChatterjeeFG16}.
	\item {\em Bounded cumulative weight property.} To automatically check this property, we can use an off-the-shelf invariant generator, e.g.~\cite{FeautrierG10,SankaranarayananSM04,Chatterjee0GG20}, to generate an invariant which proves that $W \leq M$ holds for the weight variable $W$ and some sufficiently large constant value $M$.
	%\item {\em Finite-moment distributions.} We require that each probability distribution appearing in sampling instructions in pCFGs  has {\em  finite moments}, i.e. for each \( p \in \Nset \), the \( p \)-th moment $m_{\delta}(p) = \mathbb{E}_{X \sim \delta}[|X|^p]$ is finite and can be computed by the algorithm. This is a standard assumption in static PP analysis and allows for most standard probability distributions.
	\item {\em Supporting linear invariants.} Recall that we defined UESMs and LESMs with respect to supporting invariants. To that end, before proceeding to equivalence refutation, our algorithm first synthesizes {\em linear invariants} $I_1$ and $I_2$. This can be fully automated by using existing linear invariant generators~\cite{FeautrierG10,SankaranarayananSM04}, as done in our implementation. For the purpose of invariant generation, sampling instructions are semantically over-approximated via non-determinism, hence ensuring that the computed invariants are sound.
\end{compactitem}

\smallskip\noindent{\em Template-based synthesis algorithm.} Our algorithm first uses weighted restarting to produce \( \Prestarted^1 \) and \( \Prestarted^2 \), and constructs their pCFGs \( \pCFG(\Prestarted^1) \) and \( \pCFG(\Prestarted^2) \). It then follows a template-based synthesis approach and simultaneously synthesizes the triple $(f, U_f, L_f)$ required by our proof rule in Theorem~\ref{thm:newproofrule}. The algorithm proceeds in four steps. In Step~1, it fixes templates for $f$, $U_f$ and $L_f$, where the templates are symbolic polynomials over pCFG variables of some maximal polynomial degree $D$, an algorithm parameter provided by the user. In Step~2, the algorithm collects a system of constraints over the symbolic polynomial template variables, which together encode that $(f, U_f, L_f)$ satisfy all the requirements and give rise to a correct instance of the proof rule in  Theorem~\ref{thm:newproofrule}. At this point, the resulting constraints contain polynomial expressions and quantifier alternation, which makes their solving challenging. Thus, in Step~3, the algorithm translates the collected system of constraints into a purely {\em  existentially quantified} system of {\em linear} constraints over the symbolic template variables. This is done via Handelman's theorem~\cite{handelman1988representing}, analogously to previous template-based synthesis PP analyses~\cite{Wang0GCQS19,AsadiC0GM21,ZikelicCBR22}. In Step~4, the algorithm solves the system of linear constraints via a linear programming (LP) tool. If the system is feasible, its solution yields a concrete instance of $(f, U_f, L_f)$ from Theorem~\ref{thm:newproofrule} and  the algorithm reports ``Not output-equivalent''. Due to reducing the synthesis to an LP instance, the runtime complexity of our algorithm is polynomial in the size of the pCFGs, for any fixed value of the polynomial degree parameter $D$. Finally, the proof of Theorem~\ref{thm:completeness} establishes completeness of our proof rule with (C2) OST-soundness condition, and here we show semi-completeness of our algorithm. 

\begin{theorem}[Algorithm correctness, complexity and semi-completeness, proof in Appendix]\label{thm:algo}
	The algorithm is correct. If the algorithm outputs ''Not output-equivalent'', then \( (P^1, P^2) \) are indeed not output-equivalent and the computed triple $(f, U_f, L_f)$ forms a correct instance of the proof rule in Theorem~\ref{thm:newproofrule}.
	
	Furthermore, the runtime complexity of the algorithm is polynomial in the size of the programs' pCFGs, for any fixed value of the polynomial degree $D$.
	
	Finally, the algorithm is semi-complete, meaning that if there exists a valid triple $(f, U_f, L_f)$ that can be specified via polynomial expressions of degree at most $D$ and that satisfy OST-soundness condition (C2) in Definition~\ref{def:ostsound}, then the algorithm is guaranteed to output ''Not output-equivalent''.
\end{theorem}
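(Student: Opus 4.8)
The statement bundles three assertions --- soundness of a reported refutation, the polynomial complexity bound, and semi-completeness --- and I would establish them in that order. For \emph{soundness} I would argue backwards from the algorithm's output. If it reports ``Not output-equivalent'', then the linear program assembled in Steps~1--3 is feasible and its solution instantiates the symbolic templates as concrete polynomials $f$, $U_f$, $L_f$ over the variables of $\pCFG(\Prestarted^1)$ and $\pCFG(\Prestarted^2)$. The workhorse is the ``easy'', always-sound direction of Handelman's Positivstellensatz~\cite{handelman1988representing}: a feasible assignment to the linear constraints certifies that each encoded polynomial inequality holds at every valuation of the polyhedron $I_i(\loc)\land\guards(\tau)$ it refers to --- the zero-on-output conditions for $U_f$ and $L_f$, the expected $f$-decrease~\eqref{eq:ufexp} for $U_f$ over $\pCFG(\Prestarted^1)$, the expected $f$-increase~\eqref{eq:lfexp} for $L_f$ over $\pCFG(\Prestarted^2)$, the uniform-boundedness inequalities encoding OST-soundness condition~(C2) of Definition~\ref{def:ostsound}, and the strict initial inequality~iii) of Theorem~\ref{thm:newproofrule}. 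Since the supporting invariants $I_1,I_2$ over-approximate the reachable states (invariant generation over-approximates sampling by nondeterminism), $U_f$ and $L_f$ are genuine U/LESMs w.r.t.\ these invariants and both OST-soundness obligations are discharged, so $(f,U_f,L_f)$ is a valid instance of Theorem~\ref{thm:newproofrule} for $(\Prestarted^1,\Prestarted^2)$; combined with Theorem~\ref{thm:weightedrestarting}, which gives $\mu^{\pCFG(P^i)}=\mu^{\pCFG(\Prestarted^i)}$ using the separately-verified a.s.\ termination, integrability and bounded-cumulative-weight properties, this shows $(P^1,P^2)$ are not output equivalent.

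For \emph{complexity}, fix $D$. Each $\pCFG(\Prestarted^i)$ has $|V|+1$ variables, so a degree-$\le D$ symbolic polynomial has $\binom{|V|+1+D}{D}=O(|V|^D)$ coefficients; thus $f,U_f,L_f$ introduce polynomially many unknowns. Step~2 emits one polynomial constraint per (location, transition) pair plus a constant number of global ones, each of degree polynomially bounded in the pCFG description. Step~3 Handelman-izes each constraint with multiplier degree capped by a fixed function of $D$, replacing it by polynomially many linear inequalities over the unknowns and fresh multiplier variables, for an LP of size polynomial in the pCFG size; LP feasibility is decidable in polynomial time, and the auxiliary invariant synthesis and a.s.-termination / bounded-weight checks use linear templates and are likewise polynomial (or supplied as input). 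Hence the whole pipeline runs in time polynomial in the pCFG size for any fixed $D$.

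For \emph{semi-completeness}, assume a valid triple $(f^\star,U^\star,L^\star)$ exists with every component a polynomial of degree $\le D$ and with $(\pCFG(\Prestarted^1),U^\star,f^\star)$ and $(\pCFG(\Prestarted^2),L^\star,f^\star)$ OST-sound via~(C2); I must show the Step-4 LP is feasible. The sole obstruction is the \emph{completeness} direction of the Handelman encoding, since a polynomial that is merely $\ge 0$ on a polyhedron need not possess a Handelman representation at any multiplier degree. I would bridge this in three moves: (i) pass to compact domains --- in the streamlined bounded-termination-time, bounded-support regime all program variables, hence the supporting polyhedra, are bounded, and in general one works with bounded supporting invariants, natural given that~(C2) makes $U^\star+f^\star$ and $L^\star+f^\star$ bounded on reachable states; (ii) strictify the non-strict supermartingale inequalities by spending the strict slack $\Delta>0$ of condition~iii): perturb $U^\star\mapsto U^\star-\varepsilon\kappa$ and $L^\star\mapsto L^\star+\varepsilon\kappa$ for small $\varepsilon>0$ and a fixed polynomial state function $\kappa$ that vanishes on output and strictly decreases in expectation at every non-terminal step, which keeps zero-on-output,~(C2) and~iii) intact while turning~\eqref{eq:ufexp} and~\eqref{eq:lfexp} into strict inequalities; (iii) apply Handelman's theorem to these now strictly-positive polynomials over compact polytopes, obtaining representations of some finite multiplier degree --- once the algorithm's Handelman-degree parameter is set at least that large, the LP becomes feasible and the algorithm reports ``Not output-equivalent''.

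I expect moves~(ii)--(iii) above to be the main obstacle: converting the inherently non-strict U/LESM conditions into the strict, compact-domain form Handelman's theorem requires, while exhibiting a suitable polynomial $\kappa$ and leaving the witness inequality intact; establishing compactness outside the bounded-runtime regime is the secondary technical difficulty. It is precisely to sidestep these two points that the clean statement is phrased for triples that are already expressible in degree $\le D$ and already OST-sound via~(C2), rather than via the non-polynomial indicator construction used in the proof of Theorem~\ref{thm:completeness}.
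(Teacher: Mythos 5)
Your soundness and complexity arguments follow essentially the same route as the paper's: a feasible LP instantiates the templates, the (always-sound) direction of the Handelman encoding plus the over-approximating invariants discharges the U/LESM, OST-soundness and initial-gap conditions of Theorem~\ref{thm:newproofrule}, and Theorem~\ref{thm:weightedrestarting} transfers the refutation back to $(P^1,P^2)$; the counting argument for the LP size matches as well.

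The semi-completeness part, however, diverges from the paper and has genuine gaps. First, your perturbation goes the wrong way: replacing $U^\star$ by $U^\star-\varepsilon\kappa$ and $L^\star$ by $L^\star+\varepsilon\kappa$ with $\kappa$ strictly decreasing in expectation \emph{reduces} the slack in \eqref{eq:ufexp} and \eqref{eq:lfexp} (possibly below zero) while enlarging the gap in condition~iii); to strictify the martingale inequalities you must add $\varepsilon\kappa$ to $U^\star$ and subtract it from $L^\star$, and it is then condition~iii) whose slack $\Delta$ is spent. Second, the existence of such a $\kappa$ that is polynomial (of degree at most $D$, so that the perturbed witness still fits the template), vanishes on output, strictly decreases in expectation, and preserves (C2) is asserted but not established --- note that the restarted pCFGs $\pCFG(\Prestarted^i)$ do \emph{not} have bounded termination time even when the originals do, because of the go-to restart loop. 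Third, and most importantly, your conclusion is weaker than the theorem: you only obtain feasibility ``once the algorithm's Handelman-degree parameter is set at least that large,'' i.e.\ for a sufficiently high multiplier degree depending on the witness, which neither yields the unconditional guarantee that the algorithm (with its fixed parameterization, running in polynomial time for fixed $D$) outputs ``Not output-equivalent,'' nor is compatible with the complexity claim without further argument. The paper closes exactly this hole differently: it invokes the completeness of the Handelman-based translation \emph{at fixed degree} whenever the polyhedra on the left-hand sides of the entailments are bounded, a property guaranteed here by the (C2) OST-soundness condition, citing~\cite{AsadiC0GM21,ChatterjeeFG16}, together with the fact that the Step~2 constraints are satisfiable iff a degree-$\le D$, (C2)-sound triple exists and that LP solving is complete. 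Without appealing to (or reproving) such a fixed-degree completeness result over bounded sets, your argument does not establish the statement as claimed.
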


\noindent{\em Streamlined algorithm for bounded termination time PPs.} We conclude this section by showing that the algorithm can be streamlined for the class of PPs whose (1)~termination time is bounded by some $T > 0$, and (2)~all sampling instructions consider probability distributions of bounded support. We show that for this class of PPs, any polynomial function $f$ over outputs and any polynomial UESM $U_f$ (resp.~LESM $L_f$) for $f$ are guaranteed to be (C2) OST-sound. Hence, the algorithm does not need to check OST-soundness, which significantly simplifies the system of constraints that needs to be solved by the SMT-solver.

\begin{theorem}[Proof in Appendix] \label{thm:c2algo}
	Let $P$ be a PP with (1)~termination time bounded by some $T > 0$, and (2)~all sampling instructions considering probability distributions of bounded support. Then, for any polynomial function $f$ over outputs, polynomial UESM $U_f$ for $f$ in $\pCFG(\Prestarted^1)$ and LESM $L_f$ for $f$ in $\pCFG(\Prestarted^2)$, $(\pCFG(\Prestarted^1),U_f,f)$ and $(\pCFG(\Prestarted^2),L_f,f)$ satisfy the (C2) OST-soundness condition.
\end{theorem}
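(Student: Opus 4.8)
The plan is to verify condition (C2) directly: exhibit a uniform constant $c$ such that $|Y_{\min\{t,\TimeTerm(\rho)\}}(\rho)| \leq c$ for every $t \in \Nset$ and every run $\rho$ of $\pCFG(\Prestarted^i)$, where $Y_i = \eta(S_i) + f(\vec{x}_i^\out)$ with $\eta$ being the UESM $U_f$ (resp.\ LESM $L_f$). Since $Y_i$ is a fixed polynomial of the location-indexed state, it suffices to show that along any run the variable valuations $\vec{x}_i$ stay within a bounded region of $\Rset^{|V|}$ up to and including the termination time; a polynomial evaluated on a bounded set is bounded, and there are finitely many locations, so taking the maximum over locations of the sup-norm bounds of $U_f$, $L_f$ and $f$ on that region yields the desired $c$.

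The key steps, in order, are as follows. First I would argue that in $P$ itself every program variable stays in a bounded box up to termination: by assumption the termination time is bounded by $T$, so each run performs at most $T$ transitions before reaching $\locterm$; each transition applies either a polynomial update (a continuous, hence locally bounded, map) or a sample from a distribution of bounded support; hence, starting from the fixed initial valuation $\vecinit$, after $\leq T$ steps the reachable valuations lie in a bounded set $\mathcal{B}_P \subseteq \Rset^{|V|}$ (concretely, iterate the bound: if values lie in a box $[-R_k, R_k]$ after $k$ steps, the polynomial updates and bounded-support samples push them into a box $[-R_{k+1}, R_{k+1}]$ for some finite $R_{k+1}$, and take $R := \max_{k \leq T} R_k$). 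Second, I would observe that weighted restarting preserves both properties that matter here: the added weight variable $W$ satisfies $W \in [0, M]$ always (it starts at $1$, is multiplied by nonnegative $\textbf{score}$-expressions, zeroed on failed observations, reset to $1$ on restart, and the bounded cumulative weight property caps it by $M$), the restart block introduces only the bounded-support Bernoulli sample $\textbf{prob}(1 - W/M)$ and the resets $\vec{x} := \vecinit$, $W := 1$, and — crucially — the restarting is still almost-surely terminating (Theorem~\ref{thm:weightedrestarting}) but need not be \emph{boundedly} terminating. So I would instead note that a run of $\pCFG(\Prestarted^i)$ consists of finitely or infinitely many ``rounds'', each of which is a run of (a copy of) $P$ of length $\leq T$ plus one restart transition, and in \emph{every} round all variables stay in the fixed bounded box $\mathcal{B} := \mathcal{B}_P \times [0,M]$ because each round starts from $(\locinit, (\vecinit, 1))$ — the box does not grow with the number of rounds. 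Third, with $\vec{x}_i \in \mathcal{B}$ for all $i$, set $c := \max_{\loc \in \locs}\big( \sup_{\vec{x} \in \mathcal{B}} |\eta(\loc, \vec{x})| + \sup_{\vec{x} \in \mathcal{B}} |f(\vec{x}^\out)| \big)$, which is finite since $\eta$ and $f$ are polynomials, $\mathcal{B}$ is bounded, and $\locs$ is finite; this $c$ witnesses (C2). Finally, the integrability precondition $\E[|Y_i|] < \infty$ for each $i$ in Definition~\ref{def:ostsound} is immediate from the same uniform bound.

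The main obstacle I anticipate is the second step: the streamlining hypothesis bounds the termination time of the \emph{original} PP $P$, but $\Prestarted^i$ generically has \emph{unbounded} termination time (a run may restart arbitrarily many times), so one cannot directly invoke ``bounded termination time $\Rightarrow$ bounded variables along the run''. The resolution — and the point that needs care — is that what (C2) actually requires is not a bound on runtime but a uniform bound on the \emph{values} of $Y_i$, and the restart mechanism is designed precisely so that each restart returns the state to the fixed initial valuation rather than letting values drift; hence the reachable-value box is the same in every round and stays bounded even though the run may be infinite. One should also double-check the edge case of runs that loop forever at $\locterm$ (the self-loop conventionally added so that runs are infinite sequences of states): there the state is frozen at a terminal valuation lying in $\mathcal{B}$, so the bound persists. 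A minor secondary point is confirming that the ``bounded support'' of sampling distributions is inherited by the pCFG of $\Prestarted^i$, which holds because weighted restarting adds only one new sampling instruction, the $\textbf{prob}(\cdot)$ Bernoulli, whose support is $\{0,1\}$.
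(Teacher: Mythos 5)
Your proposal is correct and follows essentially the same route as the paper's proof: bound every variable valuation reachable along a run (by iterating the $\leq T$-step bound with polynomial updates and bounded-support samples), then bound the polynomials $f$, $U_f$, $L_f$ over that box and take the maximum over the finitely many locations to get the uniform constant witnessing (C2). You are in fact more explicit than the paper, which phrases the bound only over runs of $P$ and leaves the per-round argument for the restarted programs implicit; the one small imprecision is your claim that $W \in [0,M]$ throughout a run—the bounded cumulative weight property only caps the \emph{terminal} cumulative weight, so an intermediate value of $W$ may exceed $M$—but since each round multiplies $W$ by at most $T$ polynomial score expressions evaluated on a bounded box, $W$ is still uniformly bounded by some constant and your argument goes through unchanged.
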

\section{Experimental Results}\label{sec:experiments}

% For a given PP pair, our prototype tool tries synthesizing an instance of the proof rule from Theorem~\ref{thm:proofrule} via the algorithm presented in Section~\ref{sec:algorithm}. %Given two probabilistic programs $P_1$ and $P_2$ with the same output variables $\Vout$ that might contain conditioning (in the form of \texttt{observe} statements), the tool tries to prove that their output distributions are not the same. 

\noindent{\em Implementation.} We implemented a prototype of our equivalence refutation algorithm. The prototype tool is implemented in Java. Gurobi \cite{gurobi} is used for solving the LP instances while STING \cite{SankaranarayananSM04} and ASPIC \cite{FeautrierG10} are used for generating supporting invariants. For each input program pair, the tool tries to perform polynomial template-based synthesis by using polynomial degrees 1 to 5 in order. All experiments were conducted on an Ubuntu 24.04 machine with an 11th Gen Intel Core i5 CPU and 16 GB RAM with a timeout of 10 minutes.

\smallskip\noindent{\em Baseline.} To the best of our knowledge, no prior work has considered formal and automated equivalence refutation for PPs with conditioning. However, a viable alternative approach to refuting PP equivalence would be to first compute the output distributions of the two PPs via exact symbolic inference, and then to use mathematical computation software to show that these distributions are different. Hence, we compare our method against such a baseline. We use the state-of-the-art exact inference tool PSI \cite{DBLP:conf/cav/GehrMV16} for computing output distributions, and Mathematica \cite{Mathematica} for proving that they are different.

\begin{table}
	\caption{Experimental results showing the performance of our tool and the baseline for refuting equivalence of PPs. A $\checkmark$ in the ``Eq. Ref.'' column represents that the tool successfully refuted equivalence of the two input programs, whereas ``TO'' and ``NA'' stand for ``Timeout'' and ``Not Applicable'', respectively. The \texttt{Distance} column shows the Kantorovich distance lower bound computed by out tool for the similarity refutation problem (see Section~\ref{sec:intro}).}
		%(1) comparison of our equivalence refutation method and the baseline, (2) lower bounds on Kantorovich distance computed by our similarity refutation method, and (3) time taken to solve each instance. A $\checkmark$ in the ``Eq. Ref.'' column represents that the tool successfully refuted equivalence of the two input programs, ``TO'' and ``NA'' stand for ``timeout'' and ``Not Applicable'', respectively.}
%\vspace{-1em}
\centering
\texttt{
	\resizebox{\textwidth}{!}
	{
		\fontsize{4pt}{5pt}\selectfont
		\begin{tabular}[h]{|c|c|c c c c|c c|}
			\hline
			& \multirow{2}{*}{Name} & \multicolumn{4}{c|}{Our Method} & \multicolumn{2}{c|}{PSI + Mathematica}\\
			& & Eq. Ref. & Time(s) & Distance & Time(s) & Eq. Ref. & Time(s) \\
			\hline 
			\multirow{12}{*}{\rotatebox[origin=c]{90}{Benchmarks from \cite{DBLP:conf/cav/GehrMV16}}} & 
			bertrand & \checkmark & 0.22 & 0.42 & 0.30 & \checkmark & 0.86 \\
			\cline{2-8}	
			& burglarAlarm & \checkmark & 2.01 & 0.05 & 1.39 & \checkmark & 1.04 \\
			\cline{2-8}	
			& coinBiasSmall & \checkmark & 36.92 & TO & - & \checkmark & 1.07\\
			\cline{2-8}	
			& coinPattern & \checkmark & 4.50 & 0.04 & 56.33 & \checkmark & 0.86\\
			\cline{2-8}	
			& coins & \checkmark & 0.46 & 0.12 & 0.95 & \checkmark & 0.83\\
			\cline{2-8}	
			& ev-model1 & \checkmark & 0.45 & 0.16 & 0.81 &\checkmark & 0.85\\
			\cline{2-8}	
			& ev-model2 & \checkmark & 0.75 & 0.20 & 0.92 & \checkmark & 0.80\\
			\cline{2-8}	
			& gossip & \checkmark & 0.62 & 0.1 & 0.63 & \checkmark & 1.01\\
			\cline{2-8}
			& grass & \checkmark & 1.46 & 0.06 & 8.15 & \checkmark & 1.02\\
			\cline{2-8}
			& murderMystery & \checkmark & 0.53 & 0.004 & 1.24 &\checkmark & 0.84\\
			\cline{2-8}
			& noisyOr & \checkmark & 1.17 & 0.17 & 2.15 & \checkmark & 1.98\\
			\cline{2-8}
			& twoCoins & \checkmark & 0.66 & 0.12 & 1.86 & \checkmark & 0.93\\
			\hline
			\hline 
			\multirow{18}{*}{\rotatebox[origin=c]{90}{Benchmarks from \cite{chatterjee2024equivalence}}} & 
			Simple Example & \checkmark & 7.28 & 381.71 & 12.15 & TO & - \\
			\cline{2-8}	
			& Nested Loop & TO & - & TO & - & TO & - \\
			\cline{2-8}	
			& Random Walk & \checkmark & 32.01 & TO & - & TO & - \\
			\cline{2-8}	
			& Goods Discount & \checkmark & 0.76 & 0.08 & 1.05 & TO & - \\
			\cline{2-8}	
			& Pollutant Disposal & \checkmark & 3.08 & 327.27 & 3.43 & TO & -\\
			\cline{2-8}	
			& 2D Robot & \checkmark & 0.77 & TO & - & TO & -\\
			\cline{2-8}	
			& Bitcoin Mining & \checkmark & 7.28 & TO & - & TO & - \\
			\cline{2-8}	
			& Bitcoin Mining Pool & \checkmark & 13.248 & TO & - & TO & - \\
			\cline{2-8}	
			& {Species Fight} & TO & - & TO & - & TO & -\\
			\cline{2-8}	
			& {Queuing Network} & TO & - & TO & - & TO & -\\
			\cline{2-8}	
			& {Coupon1} & \checkmark & 3.86 & 0.47 & 4.64 & TO & -\\
			\cline{2-8}	
			& {Coupon4} & \checkmark & 169.47 & TO & - & TO & -\\
			\cline{2-8}	
			& {random\_walk\_1d\_intvalued} & \checkmark & 6.24 & TO & - & TO & -\\
			\cline{2-8}	
			& {random\_walk\_1d\_realvalued} & TO & - & TO & - & TO & -\\
			\cline{2-8}	
			& {random\_walk\_1d\_adversary} & TO & - & TO & - & TO & -\\
			\cline{2-8}	
			& {random\_walk\_2d\_demonic} & \checkmark & 25.99 & TO & - & TO & -\\
			\cline{2-8}	
			& {random\_walk\_2d\_variant} & TO & - & TO & - & TO & -\\
			\cline{2-8}	
			& {retransmission} & \checkmark & 2.43 & TO & - & TO & -\\
			%\cline{1-8}
			\hline		
			\hline
			\multirow{11}{*}{\rotatebox[origin=c]{90}{Benchmarks from \cite{WangYFLO24}}}
			& add-uni & \checkmark & 10.25 & 0.01 & 17.48 & TO & - \\
			\cline{2-8}
			& cav-ex-5 & TO & - & TO & - & TO & - \\
			\cline{2-8}
			& cav-ex-7 & \checkmark & 0.19 & 10.00 & 0.20 & TO & - \\
			\cline{2-8}
			& pdmb-v3 & \checkmark & 60.27 & 2100.00 & 60.27 & NA & - \\
			\cline{2-8}
			& race & \checkmark & 1.40 & 2350.42 & 1.39 & NA & - \\
			\cline{2-8}
			& rdwalk-v12 & \checkmark & 0.85 & 100.00 & 0.75 & NA & - \\
			\cline{2-8}
			& rdwalk-v13 & \checkmark & 1.36 & 0.01 & 16.53 & NA & - \\
			\cline{2-8}
			& rdwalk-v14 & \checkmark & 0.77 & 100.00 & 0.80 & NA & - \\
			\cline{2-8}
			& rdwalk-v23 & \checkmark & 0.78 & 100.00 & 0.90 & NA & - \\
			\cline{2-8}
			& rdwalk-v24 & \checkmark & 1.06 & 200.00 & 0.90 & NA & - \\
			\cline{2-8}
			& rdwalk-v34 & \checkmark & 2.19 & 100.00 & 1.09 & NA & - \\
			\hline \hline 
			%\cline{1-8}
			\multicolumn{2}{|c|}{Success Count } & 34 & - & 26 & - & 12 & - \\
			\multicolumn{2}{|c|}{Average Runtime} & - & 11.8 & - & 7.63 & - & 1.01 \\
			\hline 
			%\cline{1-8}
		\end{tabular}
	}
}
\label{tab:experiments}
\end{table}

\smallskip\noindent{\em Benchmarks.} While our method is applicable to PPs with unbounded loops, PSI (and hence our baseline) is restricted to PPs with statically bounded loops. Hence, for all our benchmarks, we bound the maximal number of iterations of each loop by $1000$. Note that, while this modification ensures that all loops admit an upper bound on the number of iterations, it still allows loops to terminate early (i.e.~in strictly less than $1000$ iterations) if the termination condition is met early. Hence, executions of programs are not aligned by this modification, making them challenging in structure for relational analysis.

We ran our tool and the baseline on three sets of benchmarks. For each benchmark in the first two sets, we obtain a pair of non-output equivalent PPs by slightly perturbing exactly one of the sampling instructions appearing in the original PP. Benchmarks in the third set already come in pairs:
\begin{compactenum}
	\item We consider $12$ PPs with discrete probability distributions from the PSI benchmark suite (taken from the repository of~\cite{BeutnerOZ22}). We selected benchmarks supported by our syntax, e.g., avoiding reading data from external \textsc{.csv} files.
	\item We consider $18$ loopy programs from~\cite{chatterjee2024equivalence} on equivalence refutation in PPs without conditioning (benchmarks originating from \cite{Wang0GCQS19,KuraUH19}). In order to introduce conditioning statements, to each program we add an \texttt{observe} statement at the end, which observes a bound on one of the output variables. 
	\item Lastly, we consider the benchmarks used in~\cite{WangYFLO24} for symbolic inference in PPs with conditioning. Specifically, we consider pairs of benchmarks that already have several versions in the repository of \cite{WangYFLO24} (for example there are four versions of the \texttt{rdwalk} program that were different only in one parameter; we ran our program on all 6 possible program pairs). All the programs in this benchmark set contain \texttt{score} statements. Note that PSI (and hence the baseline) does not support \texttt{score} statements. However, some benchmarks from this repository contain \texttt{"if (g) score(1) else score(0) fi"} fragments, which we replace by $\texttt{observe(g)}$ in order to make PSI applicable.
\end{compactenum}

\medskip\noindent{\em Results.} Our experimental results are shown in Table~\ref{tab:experiments}. We observe the following:
\begin{compactitem}
\item On the first benchmark set in which PPs terminate in a small number of steps, both our method and the baseline manage to refute equivalence of all $12$ PP pairs. Our method is faster in $8/12$ cases, whereas the baseline is faster in $4/12$ cases. However, in some of these cases our method is significantly slower (e.g.~\texttt{coinBiasSmall}), because the synthesis algorithm needed to use degree~$5$ polynomial templates to refute equivalence. However, our method overall shows a better performance than the baseline. Moreover, our method successfully computes Kantorovich distance lower bounds in $11/12$ cases.
\item On the second benchmark set in which PPs take much longer time to terminate (e.g~loops may be executed for up to $1000$ iterations), our method refutes equivalence in $12/18$ cases, whereas the baseline fails in all cases. This shows an advantage of our method. The baseline is based on exact symbolic inference, meaning that it needs to symbolically execute the program which is a computational bottleneck when the program runtime is large. In contrast, our method is a static analysis method and does not suffer from this limitation. When it comes to computing lower bounds on Kantorovich distance, our tool succeeds in only $4/18$ cases. We believe this is due to (i) the invariant generation tool StIng failing to compute invariants in several cases, or (ii) the UESM/LESMs needed being non-polynomial (e.g. for \texttt{Species Fight}).
\item On the third benchmark set, our tool successfully refutes equivalence and computes lower bounds on distance in $10/11$ cases. These results demonstrate the applicability of our method in the presence of both \texttt{observe} and \texttt{score} statements. The baseline solves no cases -- fails on $3$ whereas it is not applicable in $8$ cases as PSI does not support \texttt{score} statements.

\end{compactitem}

\smallskip\noindent{\em Limitations.} In our experiments, we also observed a limitation of our algorithm, addressing which is an interesting direction of future work. Supporting invariants play an important role for guiding the U/LESM synthesis algorithm. Without good supporting invariants, the synthesis algorithm might fail. Generating good invariants can sometimes be time consuming. For example, our tool takes 169 seconds to solve the \texttt{coupon4} benchmark in our experiments, because STING and ASPIC invariant generators take more than 2 minutes to generate an invariant.

%\smallskip\noindent{\em Limitations.} Although our method introduces theoretical guarantees and practical advancements, we note two main limitations:
%\begin{compactenum}
%\item Similar to many previous template-based methods \cite{chatterjee2024equivalence,Chatterjee0GG20}, our algorithm can only find polynomial witnesses for refuting equivalence. Thus, if non-polynomial, e.g. exponential, functions are required for refuting equivalence of a program pair, our method is not applicable. 
%\item Supporting invariants play an important role for guiding the U/LESM synthesis algorithm and without strong invariants, the algorithm might fail. Generating such invariants can sometimes be time-consuming. For example, our tool takes 169 seconds for solving the \texttt{coupon4} benchmark in our experiments because STING and ASPIC take more than 2 minutes for generating invariants.
%\end{compactenum}
\section{Conclusion}\label{sec:conclusion}

We presented a new method for refuting equivalence of PPs with conditioning. To the best of our knowledge, we presented the first method that is fully automated, provides formal guarantees, and is sound in the presence of conditioning instructions. Moreover, we show our proof rule is also complete and our algorithm is semi-complete. Our prototype implementation demonstrates the applicability of our method to a range of pairs of PPs with conditioning. An interesting direction of future work would be to consider the automation of our method for non-polynomial programs. Another interesting direction would be to consider the applicability of the weighted restarting procedure to other static analyses in PPs with conditioning, towards reducing them to the conditioning-free setting.

\section*{Acknowledgments}
This work was partially supported by ERC CoG 863818 (ForM-SMArt) and Austrian Science Fund (FWF) 10.55776/COE12. Petr Novotný is supported by the Czech Science Foundation grant no. GA23-06963S.

\bibliographystyle{splncs04}
\bibliography{bibliography}

\newpage
\appendix
\begin{center}
	\Large
	Technical Appendix
\end{center}

\section{Proof of Theorem~\ref{thm:weightedrestarting}}

\begin{theorem}[Correctness of weighted restarting]
	Consider a PP $P$ whose pCFG is assumed to be a.s.~terminating, integrable, and to satisfy the bounded cummulative weight property with bound $M>0$. Let $\Prestarted$ be a conditioning-free PP produced by the weighted restarting procedure. Then, $\Prestarted$ is also a.s.~ terminating, integrable, and satisfies the bounded cummulative weight property. Moreover, the pCFGs of $P$ and $\Prestarted$ are output equivalent.
\end{theorem}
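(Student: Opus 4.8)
The plan is to exploit the decomposition of every run of $\Prestarted$ into \emph{passes}, each of which simulates one full run of $P$ between two consecutive visits to $\locinit$. I would begin with the purely structural facts. Since weighted restarting replaces every \textbf{observe} and \textbf{score} by ordinary assignments to the fresh variable $W$ and adds only a probabilistic branch and a go-to edge, the pCFG $\pCFG(\Prestarted)$ is conditioning-free: every transition has weight $1$, hence $W(\rho)=1$ for every terminating run $\rho$ of $\Prestarted$. This immediately yields the bounded cumulative weight property (with bound $1$); once a.s.~termination is shown it also gives $\mathbb{E}^{\pCFG(\Prestarted)}[W]=\mathbb{P}^{\pCFG(\Prestarted)}[\Termset]=1$, so $0<\mathbb{E}^{\pCFG(\Prestarted)}[W]<\infty$ and $\Prestarted$ is integrable, and moreover $\mu^{\pCFG(\Prestarted)}[B]=\mathbb{P}^{\pCFG(\Prestarted)}[\Output(B)]$ for every Borel $B$.

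Next I would formalize the pass correspondence. A pass is the segment of a run of $\Prestarted$ from a state at $\locinit$ with $W=1$ up to the first subsequent visit of the restarting block. Along such a segment the control flow is identical to a run of $P$ (the $W$-updates do not influence control flow), and upon reaching the restarting block the value of $W$ equals the cumulative $P$-weight $W_P(\rho)$ of the simulated run $\rho$. The bounded cumulative weight property of $P$ is used precisely here: $W_P(\rho)\le M$ ensures $1-W/M\in[0,1]$, so the restarting branch is a genuine probabilistic choice. By a.s.~termination of $P$, each pass reaches the restarting block with probability $1$; it then moves to $\locterm$ with probability $W/M$, and otherwise resets the state to $(\locinit,\vecinit)$ with $W=1$. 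Since a restart re-enters exactly the initial state of $\Prestarted$, the strong Markov property shows the successive passes are i.i.d.\ copies of a run of $P$, and a single pass ends the run with probability $q:=\mathbb{E}^{\pCFG(P)}[W]/M$, which lies in $(0,1]$ by integrability of $P$. Hence the number $K$ of passes is geometric, $\mathbb{P}[K=k]=(1-q)^{k-1}q$, so $\mathbb{P}[K<\infty]=1$ and $\Prestarted$ terminates a.s.

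For output equivalence I would compute the law of the final pass. The output of a terminating run of $\Prestarted$ is exactly the output of its $K$-th pass. Conditioning on $K=k$ and on the runs $\rho_1,\dots,\rho_k$ realized by the passes, the first $k-1$ passes contribute, after marginalization, a factor $(1-q)^{k-1}$ (because $\int\!\big(1-W_P(\rho)/M\big)\,d\mathbb{P}^{\pCFG(P)}(\rho)=1-q$), while the last pass contributes $\tfrac{W_P(\rho)}{M}\,d\mathbb{P}^{\pCFG(P)}(\rho)$; summing $\sum_{k\ge1}(1-q)^{k-1}=1/q$ gives that the law of the final pass's run is the size-biased measure $\tfrac{W_P(\rho)}{\mathbb{E}^{\pCFG(P)}[W]}\,d\mathbb{P}^{\pCFG(P)}(\rho)$. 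Therefore, for every Borel $B$,
\[
\mu^{\pCFG(\Prestarted)}[B]=\mathbb{P}^{\pCFG(\Prestarted)}[\Output(B)]=\frac{\mathbb{E}^{\pCFG(P)}\big[\indicator{\Output(B)}\cdot W\big]}{\mathbb{E}^{\pCFG(P)}[W]}=\mu^{\pCFG(P)}[B],
\]
which is the desired equivalence.

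The main obstacle I expect is making the pass decomposition and the i.i.d.\ claim fully rigorous inside the cylinder-based probability space of $\pCFG(\Prestarted)$: one must identify the restart times as a.s.\ finite stopping times, invoke the strong Markov property to peel off one pass at a time, and carefully handle the measure-theoretic bookkeeping of marginalizing over the earlier passes, including the degenerate case $W=0$ (a pass that restarts with certainty) and the null set of runs with infinitely many passes or with a non-terminating pass. The remaining ingredients — the structural facts about $\Prestarted$ and the geometric/size-biasing computation — are routine.
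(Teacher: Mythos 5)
Your proposal is correct and follows essentially the same route as the paper: the "passes" you describe are exactly the paper's decomposition of runs into independent restart cycles, your geometric number of passes with success probability $q=\mathbb{E}^{\pCFG(P)}[W]/M$ is the paper's $1-\prestart$, and your size-biasing computation for the final pass reproduces the paper's summation $\sum_k \prestart^k\,\probm[\neg\Restart\cap\Output(B)]=\mathbb{E}^{\pCFG(P)}[\indicator{\Output(B)}W]/\mathbb{E}^{\pCFG(P)}[W]$. The only cosmetic difference is that you invoke the strong Markov property at the restart stopping times where the paper simply appeals to independence of the cycles, and you likewise dispose of the two null events (a non-terminating pass, infinitely many passes) that the paper treats explicitly as the sets $A_2$ and $A_1$.
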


\begin{proof}
	Let $\pCFG$ and $\pCFGrestarted$ be the pCFGs of $P$ and $\Prestarted$, respectively. Let $(\Run^\pCFG, \mathcal{F}^\pCFG, \probm^\pCFG)$ and $(\Run^{\pCFGrestarted}, \mathcal{F}^{\pCFGrestarted}, \probm^{\pCFGrestarted})$ be the probability spaces over the sets of pCFG runs defined by the semantics of the two pCFGs.
	
	\smallskip\noindent{\em Well-definedness of output equivalence.} In order for the notion of output equivalence of the two pCFGs to be well defined, we first need to show that $\pCFGrestarted$ is also a.s.~terminating, integrable and satisfies the bounded cummulative weight property. Since $\pCFGrestarted$ is conditioning-free, the cumulative weight of each terminating run in it is $1$. Hence, $\pCFGrestarted$ is integrable and satisfies the bounded cummulative weight property.
	
	On the other hand, to prove that $\pCFGrestarted$ is a.s.~terminating, observe that a run in  $\pCFGrestarted$ does not terminate only if it can be decomposed into one of the following two concatenation forms:
	\begin{compactenum}
		\item $\rho = \rho_1 \circ \rho_2 \circ \dots$, which is a concatenation of infinitely many finite cycles from the initial state to itself, whose last transition corresponds to the "\textbf{go-to} $\locinit$" restarting command introduced by the weighted restarting procedure;
		\item $\rho = \rho_1 \circ \dots \circ \rho_k \circ \rho^{\textit{nonterm}}$, where $\rho_1,\dots,\rho_k$ are again finite cycles from the initial state to itself whose last transition corresponds to the "\textbf{go-to} $\locinit$" restarting command, and $\rho^{\textit{nonterm}}$ is a non-terminating run in $\pCFGrestarted$ that never executes the "\textbf{go-to} $\locinit$" restarting transition.
	\end{compactenum}
	Denote the sets of all non-terminating runs of each type as $A_1$ and $A_2$, respectively. We then have
	\[ \probm^{\pCFGrestarted}[\Termset] = 1 - \probm^{\pCFGrestarted}[A_1] - \probm^{\pCFGrestarted}[A_2]. \]
	Hence, to prove that $\pCFGrestarted$ is a.s.~terminating, it suffices to show that both $\probm^{\pCFGrestarted}[A_1] = 0$ and $\probm^{\pCFGrestarted}[A_2] = 0$.
	
	Let $\prestart$ denote the probability of a run in $\pCFGrestarted$ getting restarted. By the construction of $\pCFGrestarted$, we have that
	\begin{equation}\label{eq:dummy4}
		\prestart = 1 - \mathbb{E}^\pCFG[W] / M < 1,
	\end{equation}
	where the last inequality holds since $\mathbb{E}^\pCFG[W] > 0$ by the integrability of $\pCFG$. Now, since each execution of the restarting transition moves a run to the initial state, we have that all finite cycles of the non-terminating run $\rho = \rho_1 \circ \rho_2 \circ \dots$ are pairwise independent. Hence,
	\[ \probm^{\pCFGrestarted}[A_1] = \lim_{k\rightarrow\infty} \prestart^k = 0, \]
	where the last equality holds since $\prestart < 1$. On the other hand, all finite cycles as well as the last non-terminating run in $\rho = \rho_1 \circ \dots \circ \rho_k \circ \rho^{\textit{nonterm}}$ are also pairwise independent. Moreover, by the construction of $\pCFGrestarted$, we have that the probability of a run not terminating and not getting restarted is equal to the probability of non-termination in $\pCFG$. Hence,
	\begin{equation*}
	\begin{split}
		\probm^{\pCFGrestarted}[A_2] &= \sum_{k=0}^\infty \prestart^k \cdot \Big(1 - \probm^\pCFG[\Termset] \Big) \\
		&= \frac{1 - \probm^\pCFG[\Termset]}{1 - \prestart} = 0,
	\end{split}
	\end{equation*}
	where the last equality holds since $\probm^\pCFG[\Termset] = 1$ as $\pCFG$ is a.s.~terminating and since $\prestart < 1$. Hence $\probm^{\pCFGrestarted}[A_1] = 0$ and $\probm^{\pCFGrestarted}[A_2] = 0$ and so $\pCFGrestarted$ is a.s.~terminating, as claimed.
	
	\smallskip\noindent{\em Proving output equivalence of the pCFGs}. Now, to prove that the pCFGs of $P$ and $\Prestarted$ are output equivalent, we need to prove that the following holds for every Borel-measurable subset $B$ of $\mathbb{R}^{\Vout}$:
	\begin{equation}\label{eq:toprove}
		\mu^\pCFG(B) = \mu^{\pCFGrestarted}(B).
	\end{equation}
	
	%Before proving that eq.~\eqref{eq:toprove} holds for every $B$, we first introduce some additional notation. Let $\Paux$ denote an auxillary PP obtained from $P$ by applying all steps of the weighted restarting procedure except for "Adding restarting upon termination" (that is, only the first three steps in Section~\ref{sec:restarting} are applied). Denote by $\pCFGaux$ its pCFG, $(\Run^{\pCFGaux}, \mathcal{F}^{\pCFGaux}, \probm^{\pCFGaux})$ the probability space over its runs, and $\mu^{\pCFGaux}$ its normalized output distributions.
	
	To prove eq.~\eqref{eq:toprove}, let $B$ be a Borel-measurable subset of $\mathbb{R}^{\Vout}$. Since $\pCFGrestarted$ is conditioning-free by construction, we have that the cummulative weight of every run in $\pCFGrestarted$ is $1$, hence
	\begin{equation*}
		\mu^{\pCFGrestarted}(B) = \probm^{\pCFGrestarted}\Big[\Output^{\pCFGrestarted}(B)\Big],
	\end{equation*}
	where $\Output^{\pCFGrestarted}(B)$ denotes the set of all terminating runs in $\Run^{\pCFGrestarted}$ whose output variable valuation is in $B$. On the other hand, each run $\rho \in \Output^{\pCFGrestarted}(B)$ can be decomposed into a concatenation
	\[ \rho = \rho_1 \circ \dots \circ \rho_k \circ \rho^{\textit{term}} \]
	where:
	\begin{compactenum}
		\item $\rho_1, \dots, \rho_k$ are all finite cycles in $\Run^{\pCFGrestarted}$ from the initial state to itself, whose last transition corresponds to the "\textbf{go-to} $\locinit$" restarting command introduced by the weighted restarting procedure, and
		\item $\rho^{\textit{term}}$ is a terminating run in $\Run^{\pCFGrestarted}$ that does not contain the transition corresponding to the "\textbf{go-to} $\locinit$" restarting command, and whose output variable valuation is in $B$.
	\end{compactenum}
	This is because each terminating run in $\pCFGrestarted$ can get "restarted" only finitely many times before eventually terminating. Hence, we have
	\begin{equation}\label{eq:dummy2}
	\begin{split}
		&\mu^{\pCFGrestarted}(B) = \probm^{\pCFGrestarted}\Big[\Output^{\pCFGrestarted}(B)\Big] \\
		&= \probm^{\pCFGrestarted}\Big[\bigcup_{k=0}^\infty\Big\{ \rho \in \Output^{\pCFGrestarted}(B) \mid \rho = \rho_1 \circ \dots \circ \rho_k \circ \rho^{\textit{term}} \Big\} \Big] \\
		&= \sum_{k=0}^\infty \probm^{\pCFGrestarted}\Big[\Big\{ \rho \in \Output^{\pCFGrestarted}(B) \mid \rho = \rho_1 \circ \dots \circ \rho_k \circ \rho^{\textit{term}} \Big\} \Big] \\
		&= \sum_{k=0}^\infty \probm^{\pCFGrestarted}[B_k],
	\end{split}
	\end{equation} 
	where $B_k = \{ \rho \in \Output^{\pCFGrestarted}(B) \mid \rho = \rho_1 \circ \dots \circ \rho_k \circ \rho^{\textit{term}} \}$. Now, let $\Restart^{\pCFGrestarted}$ denote the set of all runs in $\Run^{\pCFGrestarted}$ that get restarted at least once, and $\neg\Restart^{\pCFGrestarted}$ the set of all runs in $\Run^{\pCFGrestarted}$ that never get restarted. Since each execution of the restarting transition moves a run to the initial state and since all program runs are independent, we can observe that
	\begin{equation*}
	\begin{split}
		\probm^{\pCFGrestarted}[B_k] &= \probm^{\pCFGrestarted}\Big[\Restart^{\pCFGrestarted}\Big]^k \\
		&\quad\cdot \probm^{\pCFGrestarted}\Big[\neg \Restart^{\pCFGrestarted} \cap \Output^{\pCFGrestarted}(B)\Big] \\
		&= \prestart^k \cdot \probm^{\pCFGrestarted}\Big[\neg \Restart^{\pCFGrestarted} \cap \Output^{\pCFGrestarted}(B)\Big]
	\end{split}
	\end{equation*}
	Plugging this into eq.~\eqref{eq:dummy2} yields
	\begin{equation}\label{eq:dummy3}
		\begin{split}
			\mu^{\pCFGrestarted}(B) &= \sum_{k=0}^\infty \probm^{\pCFGrestarted}[B_k] \\
			&= \sum_{k=0}^\infty \prestart^k \cdot \probm^{\pCFGrestarted}\Big[\neg \Restart^{\pCFGrestarted} \cap \Output^{\pCFGrestarted}(B)\Big] \\
			&= \frac{\probm^{\pCFGrestarted}\Big[\neg \Restart^{\pCFGrestarted} \cap \Output^{\pCFGrestarted}(B)\Big]}{1 - \prestart} \\
			&= \frac{\probm^{\pCFGrestarted}\Big[\neg \Restart^{\pCFGrestarted} \cap \Output^{\pCFGrestarted}(B)\Big]}{\mathbb{E}^\pCFG[W] / M},
		\end{split}
	\end{equation}
	where the last equality follows from the fact that $\prestart = 1 - \mathbb{E}^\pCFG[W] / M$ as shown in eq.~\eqref{eq:dummy4}. Finally, we have
	\begin{equation*}
	\begin{split}
		&\probm^{\pCFGrestarted}\Big[\neg \Restart^{\pCFGrestarted} \cap \Output^{\pCFGrestarted}(B)\Big] \\
		&= \mathbb{E}^{\pCFGrestarted}\Big[\mathbb{I}_{\neg \Restart^{\pCFGrestarted} \cap \Output^{\pCFGrestarted}(B)}\Big] \\
		&= \mathbb{E}^{\pCFG}\Big[\mathbb{I}_{\Output^{\pCFG}(B)} \cdot W/M\Big] = \mathbb{E}^{\pCFG}\Big[\mathbb{I}_{\Output^{\pCFG}(B)} \cdot W\Big] / M.
	\end{split}
	\end{equation*}
	Plugging these into eq.~\eqref{eq:dummy3} gives
	\[ \mu^{\pCFGrestarted}(B) = \frac{\mathbb{E}^{\pCFG}\Big[\mathbb{I}_{\Output^{\pCFG}(B)} \cdot W\Big] / M}{\mathbb{E}^{\pCFG}[W] / M} = \frac{\mathbb{E}^{\pCFG}\Big[\mathbb{I}_{\Output^{\pCFG}(B)} \cdot W \Big]}{\mathbb{E}^{\pCFG}[W]} = \mu^\pCFG(B) \]
	where the last equality holds by the definition of the normalized output distribution, which is exactly the claim of eq.~\eqref{eq:toprove}. This concludes our proof that the pCFGs $\pCFG$ and $\pCFGrestarted$ are output equivalent. \hfill\qed
\end{proof}

\section{Similarity Refutation}\label{app:similarity}

We now present our proof rule for similarity refutation in PPs with conditioning.

\subsection{Similarity Refutation Problem}

\smallskip\noindent{\em Kantorovich distance.} Following~\cite{chatterjee2024equivalence}, we use the established \emph{Kantorovich distance}~\cite{villani2021topics} to measure the similarity of NODs of two programs. Let \( d \colon \Rset^{|\Vout|} \times \Rset^{|\Vout|} \rightarrow [0,\infty) \) be any metric. We say that a probability measure \( \mu \) over \( \Vout \) has a \emph{finite first moment} w.r.t. \( d \) if there exists a point \( \mathbf{x}_0\in \Rset^{|\Vout|} \) such that \( \mathbb{E}_\mu[d(\mathbf{x}_0,\cdot)]<\infty \), where \( \mathbb{E}_\mu \) denotes the expectation operator associated with \( \mu \). Given two distributions \( \mu_1,\mu_2 \) over \( \Rset^{|\Vout|} \) with finite first moments w.r.t. some metric \( d \), the Kantorovich distance of these distributions w.r.t. \( d \) is the quantity
\[
\K_d(\mu_1 ,  \mu_2) = \sup_{f \in L^1_d(\Rset^{|\Vout|})} \Big|\mathbb{E}_{\mu_1}[f] - \mathbb{E}_{\mu_2}[f] \Big|, \]
where 
\[ L^1_d(\Rset^{|\Vout|}) = \Big\{f:\Rset^{|\Vout|}\rightarrow \mathbb{R} \,\Big|\, |f(x) - f(y)| \leq d(x,y) \text{ for all } x,y\in\Rset^{|\Vout|}\Big\} \]
is the set of all $1$-Lipschitz continuous functions defined over the metric space $(\Rset^{|\Vout|},d)$. When defined with respect to the discrete metric (which assigns distance $0$ to pairs of identical elements and distance $1$ otherwise), Kantorovich distance is equal to the Total Variation distance~\cite{villani2021topics}. We refer the reader to \cite[Section 3.3]{chatterjee2024equivalence} for a discussion of the motivation behind Kantorovich distance and its relationship with other measures of distributional similarity.

\smallskip\noindent {\bf Problem~2: Similarity refutation.}  Let $P^1$ and $P^2$ be two probabilistic programs with both $\pCFG(P^1)$ and $\pCFG(P^2)$ being a.s.~terminating and integrable pCFGs with the same set of output variables $\Vout$. Given (1)~a metric \( d \) over \( \mathbb{R}^{|\Vout|}  \) such that \( \mu^{\pCFG(P^1)} \) and \( \mu^{\pCFG(P^2)} \) have finite first moments w.r.t. \( d \), and (2)~some \( \varepsilon>0 \), our goal is to prove that \( P^1 \) and \( P^2 \) are \emph{not} \( \varepsilon \)\emph{-similar} w.r.t. the Kantorovich distance, i.e. that $\K_d(\mu^{\pCFG(P^1)} ,  \mu^{\pCFG(P^2)}) \geq \varepsilon$.

\subsection{Similarity Refutation in Programs with Conditioning}

We now present our new proof rule for similarity refutation in PPs with conditioning. Given a pair \( P^1,P^2 \) of a.s.~terminating PPs that satisfy the bounded cumulative weight property, our proof rule applies the weighted restarting procedure to produce the pCFGs \( \pCFG(\Prestarted^1) \) and \( \pCFG(\Prestarted^2) \) and then applies the proof rule of~\cite{chatterjee2024equivalence} for equivalence and similarity refutation in PPs without conditioning. The following theorem formalizes the instantiation of the proof rule of~\cite{chatterjee2024equivalence} to the pCFG pair \( \pCFG(\Prestarted^1) \) and \( \pCFG(\Prestarted^2) \). The proof of the theorem follows immediately by the correctness of the weighted restarting procedure (i.e.~Theorem~\ref{thm:weightedrestarting}) and the correctness of the proof rule in~\cite[Theorem~5.5]{chatterjee2024equivalence}.

\begin{theorem}[Equivalence and similarity refutation with conditioning]
	\label{thm:newproofrulesimilarity}
	Let \( P^1,P^2 \) be two a.s. terminating programs (possibly with conditioning) satisfying the bounded cumulative weight property. Let the initial states of \( \pCFG(\Prestarted^1) \) and \( \pCFG(\Prestarted^2) \) be \( (\locinit^1,\vecinit^1) \) and \( (\locinit^2,\vecinit^2) \), respectively.
	Assume that there exists a Borel-measurable function \( f \colon \Rset^{|V_\out|} \rightarrow \Rset \) and two state functions, \( U_f \) for \( \pCFG(\Prestarted^1) \) and \( L_f \) for \( \pCFG(\Prestarted^2) \) such that the following holds:
	\begin{compactitem}
		\item[i)] \( U_f \) is a UESM for \( f \) in \( \pCFG(\Prestarted^1) \) such that \( (\pCFG(\Prestarted^1), U_f, f) \) is OST-sound;
		\item[ii)] \( L_f \) is an LESM for \( f \) in \( \pCFG(\Prestarted^2) \) such that \( (\pCFG(\Prestarted^2), L_f, f) \) is OST-sound;
		\item[iii)] \( U_f(\locinit^1, \vecinit^1) + f((\vecinit^1)^\out) < L_f(\locinit^2, \vecinit^2) + f((\vecinit^2)^\out) \).
	\end{compactitem}
	Then \( P^1,P^2 \) are not output equivalent,
	
	Moreover, if \( f \) is \( 1 \)-Lipschitz continuous under a metric \( d \) over \( \Rset^{|V_\out|} \), then 
	$$\K_d\Big(\mu^{\pCFG(P^1)},\mu^{\pCFG(P^2)}\Big) \geq L_f(\locinit^2, \vecinit^2) + f((\vecinit^2)^\out) - U_f(\locinit^1, \vecinit^1) - f((\vecinit^1)^\out).$$
\end{theorem}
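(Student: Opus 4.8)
The plan is to reduce the statement to the conditioning-free setting via weighted restarting and then quote the soundness of the U/LESM-based proof rule of~\cite{chatterjee2024equivalence}; the non-equivalence conclusion is exactly Theorem~\ref{thm:newproofrule}, so I would concentrate on making the Kantorovich bound fall out of the same chain of inequalities. First I would invoke Theorem~\ref{thm:weightedrestarting}: since \( P^1 \) and \( P^2 \) are a.s.\ terminating and satisfy the bounded cumulative weight property, each \( \Prestarted^i \) is a conditioning-free, a.s.\ terminating, integrable PP and \( \mu^{\pCFG(P^i)} = \mu^{\pCFG(\Prestarted^i)} \). Being conditioning-free, every terminating run of \( \pCFG(\Prestarted^i) \) has cumulative weight \( 1 \), so \( \mathbb{E}^{\pCFG(\Prestarted^i)}[W] = 1 \) and \( \mu^{\pCFG(\Prestarted^i)} \) is simply the ordinary output distribution, i.e.\ \( \mu^{\pCFG(\Prestarted^i)}[B] = \mathbb{E}^{\pCFG(\Prestarted^i)}[\indicator{\Output(B)}] \). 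I would also remark here, as in the footnote to Theorem~\ref{thm:newproofrule}, that the go-to edge introduced by restarting is unproblematic, since the U/LESM machinery of~\cite{chatterjee2024equivalence} is formulated at the level of pCFGs with no restriction on their topology.

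Second, I would apply \cite[Theorem~5.5]{chatterjee2024equivalence} to \( \pCFG(\Prestarted^1) \) and \( \pCFG(\Prestarted^2) \). Hypothesis i) together with OST-soundness of \( (\pCFG(\Prestarted^1), U_f, f) \) yields the upper bound \( U_f(\locinit^1, \vecinit^1) + f((\vecinit^1)^\out) \geq \E_{\mu^{\pCFG(\Prestarted^1)}}[f(\val^\out)] \), and symmetrically hypothesis ii) yields \( L_f(\locinit^2, \vecinit^2) + f((\vecinit^2)^\out) \leq \E_{\mu^{\pCFG(\Prestarted^2)}}[f(\val^\out)] \), with both expectations finite. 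Combining these with hypothesis iii) gives the chain
\[ \E_{\mu^{\pCFG(\Prestarted^1)}}[f(\val^\out)] \leq U_f(\locinit^1, \vecinit^1) + f((\vecinit^1)^\out) < L_f(\locinit^2, \vecinit^2) + f((\vecinit^2)^\out) \leq \E_{\mu^{\pCFG(\Prestarted^2)}}[f(\val^\out)], \]
so the two output distributions assign strictly different expectations to \( f \); since \( \mu^{\pCFG(P^i)} = \mu^{\pCFG(\Prestarted^i)} \), the programs \( P^1, P^2 \) are not output equivalent.

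Third, for the Kantorovich bound I would use that \( \mu^{\pCFG(P^i)} = \mu^{\pCFG(\Prestarted^i)} \) implies \( \K_d(\mu^{\pCFG(P^1)}, \mu^{\pCFG(P^2)}) = \K_d(\mu^{\pCFG(\Prestarted^1)}, \mu^{\pCFG(\Prestarted^2)}) \), and that if \( f \) is \( 1 \)-Lipschitz under \( d \) then \( f \in L^1_d(\Rset^{|\Vout|}) \), so directly from the definition of \( \K_d \) as a supremum over \( L^1_d(\Rset^{|\Vout|}) \),
\[ \K_d\big(\mu^{\pCFG(\Prestarted^1)}, \mu^{\pCFG(\Prestarted^2)}\big) \geq \big| \E_{\mu^{\pCFG(\Prestarted^1)}}[f] - \E_{\mu^{\pCFG(\Prestarted^2)}}[f] \big|. \]
By the displayed chain the right-hand side equals \( \E_{\mu^{\pCFG(\Prestarted^2)}}[f] - \E_{\mu^{\pCFG(\Prestarted^1)}}[f] \geq \big(L_f(\locinit^2, \vecinit^2) + f((\vecinit^2)^\out)\big) - \big(U_f(\locinit^1, \vecinit^1) + f((\vecinit^1)^\out)\big) \), which is exactly the claimed inequality.

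The hard part is not a calculation but checking the interface between the two imported results: I must verify that after weighted restarting the hypotheses of \cite[Theorem~5.5]{chatterjee2024equivalence} genuinely hold on each \( \pCFG(\Prestarted^i) \) — conditioning-freeness (so the NOD collapses to the plain output distribution), preservation of OST-soundness as assumed, and correct handling of the new go-to edge by the pCFG-level formulation. A minor but worth-stating point is well-definedness: the expectations \( \E_{\mu^{\pCFG(\Prestarted^i)}}[f] \) are finite by OST-soundness, and \( \K_d \) itself is well-defined because, by the standing assumption of the similarity refutation problem, \( \mu^{\pCFG(P^1)} \) and \( \mu^{\pCFG(P^2)} \) have finite first moments with respect to \( d \).
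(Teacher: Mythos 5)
Your proposal is correct and follows the same route as the paper: reduce to the conditioning-free setting via Theorem~\ref{thm:weightedrestarting} (which gives \( \mu^{\pCFG(P^i)} = \mu^{\pCFG(\Prestarted^i)} \)) and then invoke the soundness of the U/LESM proof rule of \cite[Theorem~5.5]{chatterjee2024equivalence} on \( \pCFG(\Prestarted^1), \pCFG(\Prestarted^2) \). The paper states the conclusion as an immediate consequence of these two results, whereas you additionally unfold the intermediate expectation bounds and the definition of \( \K_d \) as a supremum over \( 1 \)-Lipschitz functions, which is exactly the argument underlying the cited rule.
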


\section{Proof of Theorem \ref{thm:algo}}

\begin{proof}
	Suppose that the algorithm outputs ``Not output-equivalent'' and computes the ordered triple $(f, U_f, L_f)$. We show that these indeed define an instance of the proof rule in Theorem~\ref{thm:newproofrule}. Hence, it follows by Theorem~\ref{thm:newproofrule} that the algorithm is correct.
	
	Since the algorithm outputs ``Not output-equivalent'', we must have that $(f, U_f, L_f)$ provide a part of the solution to the system of constraints in Step~3. Thus, by  the correctness of the reduction to an LP instance in~\cite{ChatterjeeFG16,AsadiC0GM21,Wang0GCQS19,ZikelicCBR22}, it follows that they also provide a solution to the system of constraints collected in Step~2. But the constraints collected in Step~2 entail all the conditions imposed by the proof rule in Theorem~\ref{thm:newproofrule}. For all conditions except for OST-soundness, this follows immediately from the form of collected constraints. For OST-soundness, this follows by the correctness of the algorithm in~\cite{chatterjee2024equivalence} as we are using the same constraints to encode OST-soundness conditions. Hence, any solution to the system of constraints collected in Step~2 gives rise to $(f, U_f, L_f)$ which forms a correct instance of the proof rule in Theorem~\ref{thm:newproofrule}, as claimed.
	
	Second, we prove that the algorithm runs in polynomial time in the size of the pCFGs, for any fixed value of the polynomial degree parameter $D$. Suppose that the parameter $D$ is fixed. The number of constraints collected in Step~2 is linear in the number of pCFG locations, and each constraint is of size polynomial in the size of the pCFGs. Next, the reduction to an LP instance via Handelman's theorem gives rise to an LP instance which is of polynomial size, as shown in~\cite{chatterjee2024equivalence}, hence the size of the resulting LP remains polynomial in the size of the input pCFGs. Finally, since LP solving can be done in polynomial time, we conclude that the runtime of our algorithm is polynomial for a fixed value of $D$.
	
	Finally, for the semi-completeness claim, the algorithm of~\cite{chatterjee2024equivalence}, and hence our algorithm, in Step~2 collects a systemf of constraints which is satisfiable if and only if $(f,U_f,L_f)$ of a fixed polynomial template satisfy all the conditions of Theorem~\ref{thm:newproofrule} under OST-soundness condition (C2). Furthermore, the translation via Handelman's theorem in Step~3 is sound and complete for a fixed maximal polynomial degree $D$ whenever the satisfiability set of the left-hand-sides of all polynomial entailments is bounded, which is again ensured under the (C2) OST-soundness~\cite{AsadiC0GM21,ChatterjeeFG16}. Finally, LP solving in Step~4 is sound and complete. Hence, the semi-completeness claim follows. \hfill\qed
\end{proof}

\section{Proof of Theorem \ref{thm:c2algo}}

\begin{proof}
	We show that the value of every variable $v \in V$ of a PP $P$ that satisfies both (1) and (2) is bounded in every run of $P$. Each run of the $P$ is known to take at most $T$ steps. For each $v \in V$ and $0 \leq i \leq T$, let $M_i(v)$ be a non-negative real number such that the value of $v$ at the $i$-th step is bounded between $-M_i(v)$ and $M_i(v)$. For $i=0$, these bounds are specified by the initial variable valuation. Consider the transition $\tau$ being taken from $i$-th step to the $i+1$-st step. The update $\updates(\tau)$ has three possible cases for each variable:
	\begin{compactenum}
		\item If $\updates(\tau)(v) = \bot$, then $M_{i+1} = M_i$.
		\item If $\updates(\tau)(v) = g(x)$ for some  polynomial $g$ over $V$, then given that the current valuation of each variable is bounded by $M_i(\cdot)$, the value of $g$ will also be bounded. Hence, $M_{i+1}$ is the bound on $g(x)$.
		\item If $\updates(\tau)(v) = \delta$ for some probability distribution $\delta$, then $\delta$ is known to be bounded by the theorem assumptions, yielding a bound for $v$ in the next step that can be used as $M_{i+1}$.
	\end{compactenum}
	Since the program terminates in at most $T$ steps, we can obtain a global bound $M$ for all variables, such that the value of all variables always lies in $[-M,M]$ during a run of $P$. 
	
	Now, let $f, U_f$ and $L_f$ be as in the theorem statement. Since all three of these functions are polynomials in program variables and the program variables are known to be bounded in $[-M,M]$, it can be concluded that there is a bound $C$ such that the value of $f, U_f$ and $L_f$ stays in $[-C,C]$ during every run of $P$. This immediately implies that $|f+U_f|, |f+L_f| \leq 2C$ during every run of $P$, which means $P$ and the triple $(f,U_f, L_f)$ satisfy the (C2) OST-condition.
	\hfill\qed
\end{proof}

\begin{comment}

\section{Pseudocodes of the Coupon Collector Benchmarks}\label{app:benchmarks}
In this section, we present the coupon collector benchmarks used in the experiments in Section~\ref{sec:experiments}.
\\
\begin{center}
\begin{figure}[h]
	\vspace{-1cm}
	\hspace{-2cm}
	\begin{minipage}[c]{0.5\textwidth}
		\begin{lstlisting}[mathescape]
			$P_1:$
			(a,b,cost,counter):=(0,0,0,1);
			while $counter \leq 1000$ do  
				cost := cost+a+b;
				if prob($\frac{1}{2}$) then
					a:=1;
				else
					b:=1;
				fi
				counter:=counter+1;
			done
			observe(cost $\geq$ 1500);
			return (cost,a,b);
		\end{lstlisting}
	\end{minipage}
	\begin{minipage}[c]{0.5\textwidth}
		\begin{lstlisting}[mathescape]
			$P_2:$
			(a,b,cost,counter):=(0,0,0,1);
			while $counter \leq 1000$ do  
				cost := cost+a+b;
				if prob($\frac{1}{3}$) then
					a:=1;
				else
					b:=1;
				fi
				counter:=counter+1;
			done
			observe(cost $\geq$ 1500);
			return (cost,a,b);
		\end{lstlisting}
	\end{minipage}
	\caption{Coupon Collector Benchmark with 2 coupons}
\end{figure}

\begin{figure}[h]
	\vspace{-1cm}
	\hspace{-2cm}
	\begin{minipage}[c]{0.5\textwidth}
		\begin{lstlisting}[mathescape]
			$P_1:$
			(a,b,c,d,cost,counter)
				:=(0,0,0,0,0,1);
			while $counter \leq 1000$ do  
				cost := cost+a+b+c+d;
				if prob($\frac{1}{4}$) then
					a:=1;
				elif prob($\frac{1}{4}$) then
					b:=1;
				elif prob($\frac{1}{4}$) then
					c:=1;
				elif prob($\frac{1}{4}$) then
					d:=1;
				fi
				counter:=counter+1;
			done
			observe(cost $\geq$ 1500);
			return (cost,a,b,c,d);
		\end{lstlisting}
	\end{minipage}
	\begin{minipage}[c]{0.5\textwidth}
		\begin{lstlisting}[mathescape]
			$P_2:$
			(a,b,c,d,cost,counter)
				:=(0,0,0,0,0,1);
			while $counter \leq 1000$ do  
				cost := cost+a+b+c+d;
				if prob($\frac{2}{5}$) then
					a:=1;
				elif prob($\frac{1}{5}$) then
					b:=1;
				elif prob($\frac{1}{5}$) then
					c:=1;
				elif prob($\frac{1}{5}$) then
					d:=1;
				fi
				counter:=counter+1;
			done
			observe(cost $\geq$ 1500);
			return (cost,a,b,c,d);
		\end{lstlisting}
	\end{minipage}
	\caption{Coupon Collector Benchmark with 4 coupons}
\end{figure}
\end{center}
\end{comment}

\end{document}